\newcommand{\nc}{\newcommand}
\nc{\rnc}{\renewcommand}
\nc{\nn}{\nonumber}
\nc{\der}{{\partial}}
\rnc{\Im}{{\rm{Im}\,}}
\rnc{\Re}{{\rm{Re}\,}}
\nc{\db}{\displaybreak[0]\\}
\nc{\bra}{\langle}
\nc{\ket}{\rangle}
\nc{\bs}{\boldsymbol}
\newtheorem{theorem}{Theorem}[section]
\newtheorem{lemma}[theorem]{Lemma}
\newtheorem{proposition}[theorem]{Proposition}
\theoremstyle{definition}
\newtheorem{definition}[theorem]{Definition}
\numberwithin{equation}{section}
\numberwithin{equation}{section}
\begin{document}%
%
\title{Izergin-Korepin analysis on
the wavefunctions of \\
the $U_q(sl_2)$ six-vertex model with reflecting end}

\author{
Kohei Motegi \thanks{E-mail: kmoteg0@kaiyodai.ac.jp}
\\\\
{\it Faculty of Marine Technology, Tokyo University of Marine Science and Technology,}\\
 {\it Etchujima 2-1-6, Koto-Ku, Tokyo, 135-8533, Japan} \\
\\\\
\\
}

\date{\today}

\maketitle

\begin{abstract}
We extend the recently developed Izergin-Korepin analysis on the wavefunctions
of the $U_q(sl_2)$ six-vertex model to the reflecting boundary conditions.
Based on the Izergin-Korepin analysis, we 
determine the exact forms of the symmetric functions which represent
the wavefunctions and its dual.
Comparison of the symmetric functions
with the coordinate Bethe ansatz wavefunctions
for the open XXZ chain by Alcaraz-Barber-Batchelor-Baxter-Quispel
is also made.
As an application, we derive algebraic identities for the symmetric functions
by combining the results
with the determinant formula of the domain wall boundary partition function
of the six-vertex model with reflecting end.

\end{abstract}
2010 Mathematics Subject Classification: 05E05 \\
{\it Keywords: Integrable models, Quantum inverse scattering method,
Symmetric functions}

\section{Introduction}
Partition functions are fundamental objects in statistial physics
and field theory. In the field of integrable models
\cite{Baxter,KBI,Reshetikhin}, exact computations
of partition functions is one of the most challenging and interesting tasks.
The domain wall boundary partition functions is one of the most well-studied class of partition functions. It was first introduced and investigated
by Korepin \cite{Ko}, and later Izergin found its determinant representation
\cite{Iz} based on his analysis, which have been used for applications to the enumeration of the alternating sign matrices \cite{Br,Ku1,Ku2,Okada} in later years.
The most important step for the analysis of the domain wall boundary
partition functions was the work by Korepin \cite{Ko}, in which he presented a way
how to view partition functions as multivariable polynomials
of spectral parameters by using the quantum inverse scattering method,
which was crucial for the Izergin-Korepin determinant formula \cite{Iz} to be found.
The Izergin-Korepin analysis was applied to various models and variants of
the domain wall boundary partition functions
\cite{Tsuchiya,ZZfelderhof,FCWZfelderhof,FWZfelderhof,PRS,Ros,FK,Ga}
such as the
mixture with the reflecting boundary, half-turn boundary,
and recently extended to the scalar products by Wheeler \cite{Wheeler} which he succeeded
by introducing the notion of intermediate scalar products.

Recently, we extended the Izergin-Korepin analysis to the (projected)
wavefunctions
\cite{Moellipticfelderhof,MoAMP,Motr},
which is a class of partition functions
including the domain wall partition functions as a special case.
Recently, there are extensive studies on the wavefunctions 
(see
\cite{Bogo,BW,BWZ,WZnew,vDE,MS,Motegi,MS2,Korff,GK2,Borodin,BP1,TakeyamaHecke,Takeyama,WZ,BBF,Iv,BBCG,Tabony,BMN,BSfelderhof,LMP}
for examples on which various methods are developed for various
models, boundary conditions, etc),
since it has been widely recognized that the wavefunctions are
integrable model representations of symmetric functions
which is one of the most important objects in representation theory
and algebraic combinatorics, and in many cases it has connections
with other branches of mathematics such as the Schubert calculus
and automorphic representation theory
\cite{LS,FoKi,Buch,IN2,BBBfelderhof}.
We recently constructed the
Izergin-Korepin method to analyze the wavefunctions
of the six-vertex type models \cite{Moellipticfelderhof,MoAMP,Motr},
which is a natural extension
of the one on the domain wall boundary partition functions \cite{Ko,Iz}.
It seems to be a rather universal method in the sense that
it can be used to study exotic boundary conditions,
and can also be extended to the elliptic models
by using the notion of elliptic polynomials.
For example, we analyzed the Deguchi-Martin model
\cite{DMfelderhof}
in \cite{Moellipticfelderhof} and showed that the partition functions
are expressed as a product of
elliptic Schur functions and deformed elliptic Vandermonde determinant.

In this paper, we extend the Izergin-Korepin analysis on the
wavefunctions of the $U_q(sl_2)$ six-vertex model to the reflecting boundary conditions.
As for the domain wall boundary partition functions with reflecting end,
the determinant formula was found by Tsuchiya \cite{Tsuchiya}
(see also Kuperberg \cite{Ku1,Ku2} and Okada \cite{Okada}),
and the thermodynamic limit is investigated by Ribeiro-Korepin \cite{RK}
following the idea of Korepin-Zinn-Justin \cite{KZ}.
As for the wavefunctions under reflecting boundary,
there are studies on the reduced five-vertex model \cite{CMRV}
and related $q$-boson model \cite{WZnew,vDE}, boundary perimeter Bethe ansatz of the XXX chain \cite{Fra},
and solutions of the boundary $q$KZ equation \cite{RSV}.
We show in this paper that the Izergin-Korepin analysis
can be applied to the wavefunctions of the
six vertex model with reflecting end.
Based on the analysis, we determine the exact form of the symmetric functions representing
the wavefunctions and its dual.
As an application of the correspondence between the wavefunctions
and the symmetric functions, we derive algebraic identities
for the symmetric functions by combining with the determinant formula
of Tsuchiya \cite{Tsuchiya} and Kuperberg \cite{Ku2}.
We also compare the homogeneous limit of the symmetric functions
with the coordinate Bethe ansatz wavefunctions
for the open XXZ chain by Alcaraz-Barber-Batchelor-Baxter-Quispel
\cite{ABBBQ}.

This paper is organized as follows.
In the next section, we introduce the wavefunctions and its dual
of the $U_q(sl_2)$ six-vertex model with reflecting end.
In section 3, we present the computation of the simplest case.
In section 4, we perform the Izergin-Korepin analysis
which uniquely characterizes the wavefunctions.
In section 5, we present the explicit forms of the symmetric functions
which represents the wavefunctions, and show that
it satisfies all the properties of the proposition in the section 4.
We compare the symmetric functions with
the coordinate Bethe ansatz wavefunctions
for the open XXZ chain at the end of the section.
As an application,
we derive in section 6 algebraic identities for the symmetric functions
by comparing two ways of evaluations of the domain wall boundary partition functions. Section 7 is devoted to conclusion.

\section{The six-vertex model and the wavefunctions under reflecting boundary}

In this section, we formulate the wavefunctions of the $U_q(sl_2)$
six-vertex model under reflecting boundary and its dual,
which we analyze in this paper.
They can be regarded as natural extensions of the domain wall boundary
partition functions under reflecting boundary in the paper of Kuperberg \cite{Ku2}
(see also Tsuchiya \cite{Tsuchiya}). Note that the $R$-matrices used in this paper
is a slightly gauge transformed one in Kuperberg. We do this
gauge transformation since it is better suited for the Izergin-Korepin analysis on the wavefunctions.

We first introduce two-dimensional Fock spaces $V_a$ and $\mathcal{F}_j$,
$j=1,\dots,M$.
We denote the orthonormal basis of
$V_a$ and its dual as $\{|0 \rangle_a, |1 \rangle_a \}$ and $\{{}_a \langle 0|, {}_a \langle 1|\}$. Similarly, we denote the basis of $\mathcal{F}_j$
and it dual as
$\{|0 \rangle_j, |1 \rangle_j \}$ and $\{{}_j \langle 0|, {}_j \langle 1|\}$.
We usually call $V_a$ as the auxiliary space and $\mathcal{F}_j$
as quantum spaces.

Next, we introduce the $L$-operator of the six-vertex model.
The $L$-operator we use in this paper
is the $U_q(sl_2)$ $R$-matrix \cite{Dr,J}.
We denote the $L$-operator acting on
the spaces $V_a \otimes \mathcal{F}_j$ by $L_{aj}(z,w_j)$,
whose non-zero matrix elements are given by (Figure \ref{pictureloperator})
\begin{align}
{}_a \langle 0| {}_j \langle 0 | L_{a j}(z,w_j)
|0 \rangle_a | 0 \rangle_j&=az^{-1}w_j-a^{-1}z, \\
{}_a \langle 0| {}_j \langle 1 | L_{a j}(z,w_j)
|0 \rangle_a | 1 \rangle_j&=az-a^{-1}z^{-1}w_j, \\
{}_a \langle 0| {}_j \langle 1 | L_{a j}(z,w_j)
|1 \rangle_a | 0 \rangle_j&=a^2-a^{-2}, \\
{}_a \langle 1| {}_j \langle 0 | L_{a j}(z,w_j)
|0 \rangle_a |1 \rangle_j&=(a^2-a^{-2})w_j, \\
{}_a \langle 1| {}_j \langle 0 | L_{a j}(z,w_j)
|1 \rangle_a | 0 \rangle_j&=az-a^{-1}z^{-1}w_j, \\
{}_a \langle 1| {}_j \langle 1 | L_{a j}(z,w_j)
|1 \rangle_a | 1 \rangle_j&=az^{-1}w_j-a^{-1}z.
\end{align}

\begin{figure}[ht]
\includegraphics[width=12cm]{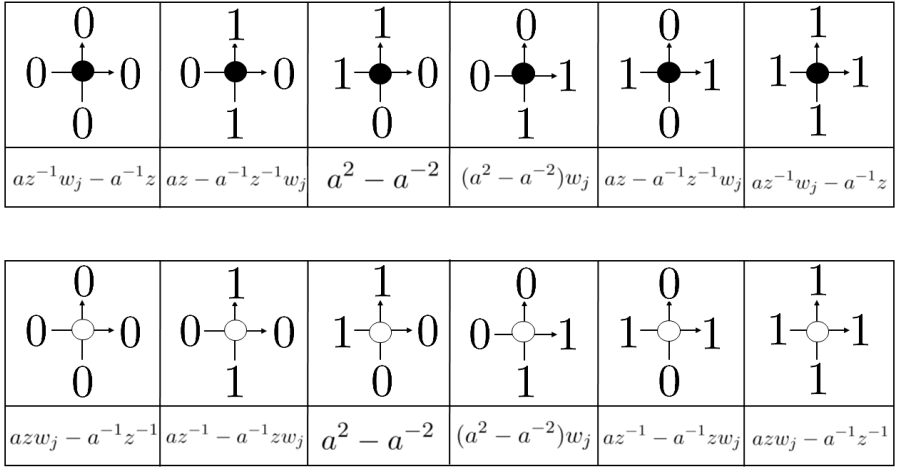}
\caption{A graphical description of the $L$-operators
$L_{aj}(z,w_j)$ (top) and $L_{aj}(z^{-1},w_j)$ (bottom),
used to construct the double-row monodromy matrix.
}
\label{pictureloperator}
\end{figure}

Using the $L$-operators, we construct the monodromy matrix
\begin{align}
T_a(z|w_1,\dots,w_M)=L_{a1}(z,w_1) \cdots L_{aM}(z,w_M).
\end{align}

We also introduce the following $K$-operator $K_{a_2 a_1}(z)$ \cite{Sklyanin}
acting on the auxiliary space (Figure \ref{picturekanddouble} top)
\begin{align}
K_{a_2 a_1}(z)={}_{a_2} \langle 0| {}_{a_1} \langle 1| (baz-b^{-1}a^{-1}z^{-1})
+{}_{a_2} \langle 1| {}_{a_1} \langle 0| (ba^{-1}z^{-1}-b^{-1}az).
\end{align}

\begin{figure}[ht]
\includegraphics[width=12cm]{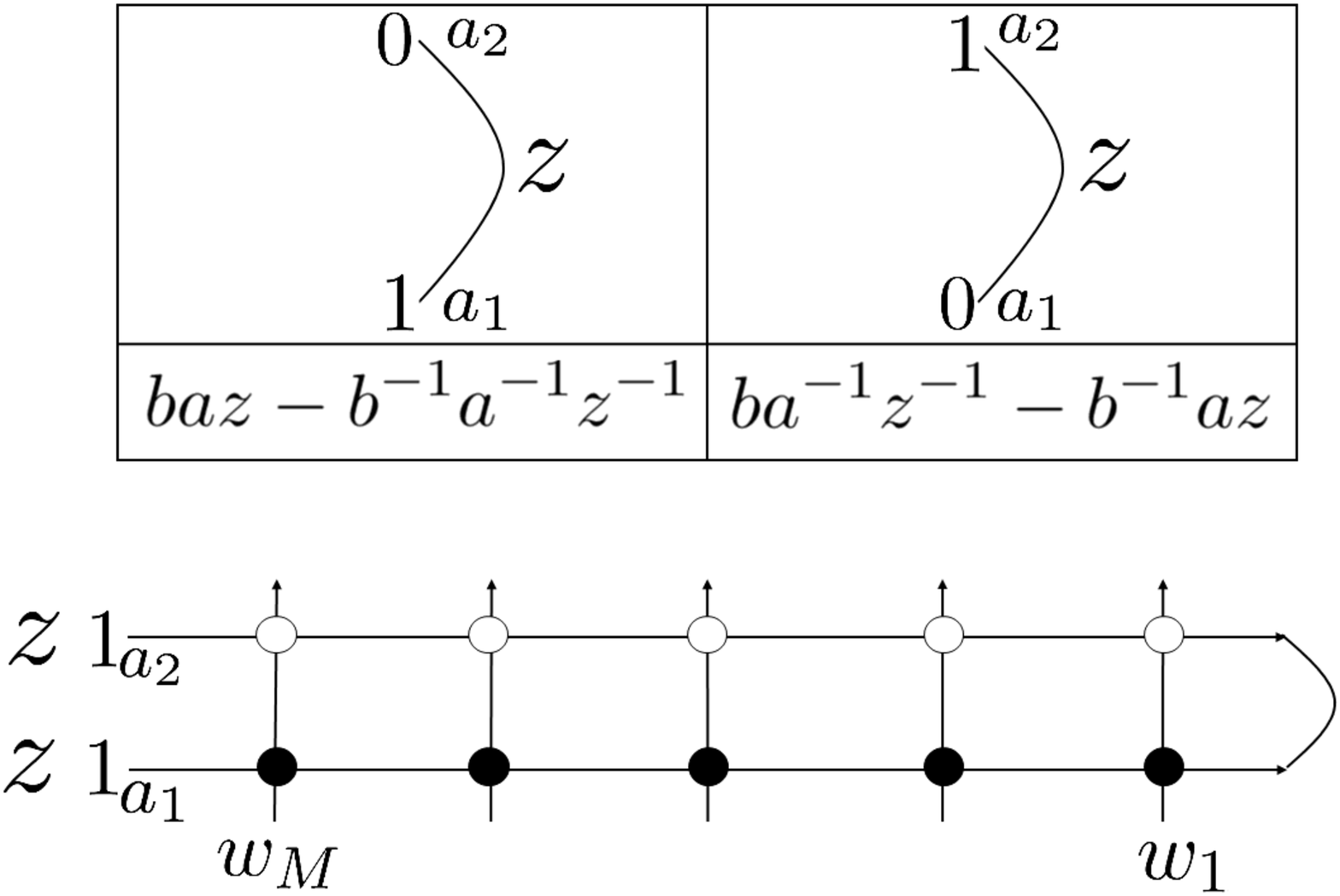}
\caption{A graphical description of the $K$-operator
$K_{a_2 a_1}(z)$ (top) and the double-row $B$-operator $\mathcal{B}(z|w_1,\dots,w_M)$ (bottom).
}
\label{picturekanddouble}
\end{figure}

We now introduce the double-row monodromy matrix
using the monodromy matrix
$T_a(z|w_1,\dots,w_M)$
and the $K$-operator $K_{a_2 a_1}(z)$ as
\begin{align}
\mathcal{T}_{a_2 a_1}(z|w_1,\dots,w_M)
=K_{a_2 a_1}(z) T_{a_2}(z^{-1}|w_1,\dots,w_M)
T_{a_1}(z|w_1,\dots,w_M).
\end{align}
We define the following double-row $B$-operator
(Figure \ref{picturekanddouble} bottom)
as a matrix element
of the double-row monodromy matrix with respect to the auxilary space
\begin{align}
&\mathcal{B}(z|w_1,\dots,w_M)
=\mathcal{T}_{a_2 a_1}(z|w_1,\dots,w_M)|1 \rangle_{a_2} |1 \rangle_{a_1}.
\end{align}

Using the matrix elements of the ordinary monodromy matrices
\begin{align}
B(z|w_1,\dots,w_M)&=
{}_a \langle 0|T_a(z|w_1,\dots,w_M)|1 \rangle_a, \\
D(z|w_1,\dots,w_M)&=
{}_a \langle 1|T_a(z|w_1,\dots,w_M)|1 \rangle_a, 
\end{align}
the double-row $B$-operator is written as
\begin{align}
\mathcal{B}(z|w_1,\dots,w_M)
=&K_{a_2 a_1}(z) T_{a_2}(z^{-1}|w_1,\dots,w_M) T_{a_1}(z|w_1,\dots,w_M)
|1 \rangle_{a_2} |1 \rangle_{a_1} \nonumber \\
=&(ba^{-1}z^{-1}-b^{-1}az)D(z^{-1}|w_1,\dots,w_M)B(z|w_1,\dots,w_M)
\nonumber \\
&+(baz-b^{-1}a^{-1}z^{-1})B(z^{-1}|w_1,\dots,w_M)D(z|w_1,\dots,w_M).
\end{align}

\begin{figure}[ht]
\includegraphics[width=12cm]{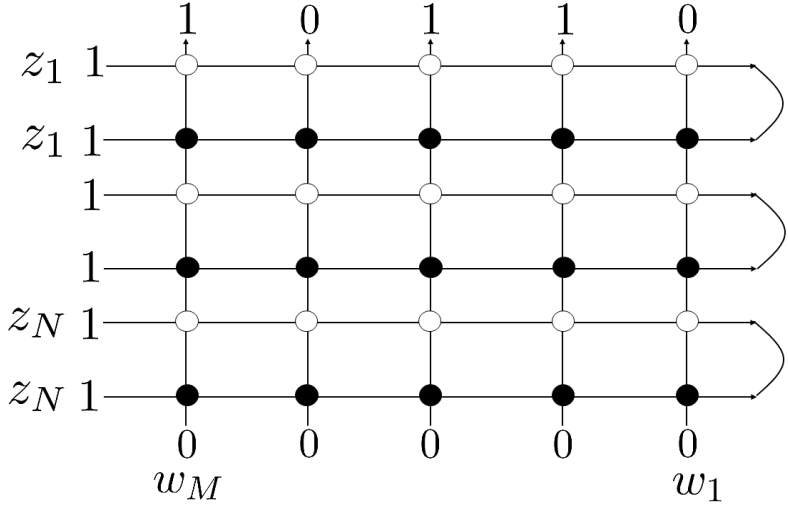}
\caption{A graphical description of the wavefunction under reflecting boundary
$W_{M,N}(z_1,\dots,z_N|w_1,\dots,w_M|x_1,\dots,x_N)$.
The figure illustrates the case $M=5$, $N=3$, $x_1=2$,
$x_2=3$, $x_3=5$.
}
\label{picturewavefunction}
\end{figure}

In order to introduce wavefunctions,  we also define special states in the tensor product of the Fock spaces
$\mathcal{F}_1 \otimes \cdots \otimes \mathcal{F}_M$
and its dual $(\mathcal{F}_1 \otimes \cdots \otimes \mathcal{F}_M)^*$
as
\begin{align}
| 0^{M} \rangle&:=|0\ket_1\
\otimes \dots \otimes |0\ket_M, \\
| 1^{M} \rangle&:=|1\ket_1\
\otimes \dots \otimes |1\ket_M, \\
\langle 0^{M}|&:=
{}_1\bra 0|\otimes\dots \otimes{}_M\bra 0|, \\
\langle 1^{M}|&:=
{}_1\bra 1|\otimes\dots \otimes{}_M\bra 1|.
\end{align}
By acting the operators $\sigma_j^+$ and $\sigma_j^-$ defined by
\begin{align}
\sigma^+_j|1 \rangle_k=\delta_{jk}|0 \rangle_k, \
\sigma^+_j|0 \rangle_k =0, \
{}_k \langle 1|\sigma^+_j=0, \
{}_k \langle 0|\sigma^+_j=\delta_{jk} {}_k \langle 1|, \\
\sigma^-_j|0 \rangle_k=\delta_{jk}|1 \rangle_k, \
\sigma^-_j|1 \rangle_k =0, \
{}_k \langle 0|\sigma^-_j=0, \
{}_k \langle 1|\sigma^-_j=\delta_{jk} {}_k \langle 0|,
\end{align}
on $|0^M \rangle$, $\langle 0^M|$ and $\langle 1^M|$,
we introduce states
\begin{align}
\langle x_1 \cdots x_N|
&=
\langle 0^M|
\prod_{j=1}^N \sigma^+_{x_j},
\label{particleconfiguration} \\
|x_1 \cdots x_N \rangle
&=
\prod_{j=1}^N \sigma^-_{x_j}|0^M \rangle,
\end{align}
for integers $x_1,\dots,x_N$ satisfying
$1 \le x_1 < x_2 < \cdots < x_N \le M$,
and
\begin{align}
|\overline{x_1} \cdots \overline{x_N} \rangle
&=
\prod_{j=1}^N \sigma^+_{\overline{x_j}}|1^M \rangle,
\label{dualholeconfiguration}
\end{align}
for integers $\overline{x_1},\dots,\overline{x_N}$
satisfying
$1 \le \overline{x_1}< \overline{x_2} < \cdots < \overline{x_N} \le M$.

\begin{figure}[ht]
\includegraphics[width=12cm]{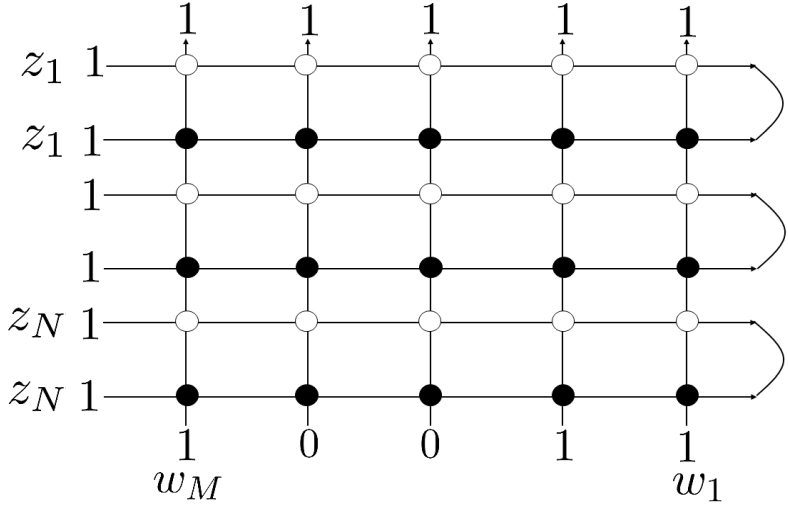}
\caption{A graphical description of the dual wavefunction under reflecting boundary
$\overline{W}_{M,N}(z_1,\dots,z_N|w_1,\dots,w_M|\overline{x_1},\dots,\overline{x_N})$.
The figure illustrates the case $M=5$, $N=3$, $\overline{x_1}=1$,
$\overline{x_2}=2$, $\overline{x_3}=5$.
}
\label{picturedualwavefunction}
\end{figure}

We define (projected) wavefunctions $W_{M,N}(z_1,\dots,z_N|w_1,\dots,w_M|x_1,\dots,x_N)$
by acting the double-row $B$-operators $\mathcal{B}(z_j|w_1,\dots,w_M)$ ($j=1,\dots,N$) on the state $|0^M \rangle$
and projecting to the state
$\langle x_1 \cdots x_N|$ (Figure \ref{picturewavefunction})
\begin{align}
&W_{M,N}(z_1,\dots,z_N|w_1,\dots,w_M|x_1,\dots,x_N) \nonumber \\
&=\langle x_1 \cdots x_N|\mathcal{B}(z_1|w_1,\dots,w_M) \cdots
\mathcal{B}(z_N|w_1,\dots,w_M)|0^M \rangle.
\end{align}
Similarly, we define the dual wavefunction
$\overline{W}_{M,N}(z_1,\dots,z_N|w_1,\dots,w_M|x_1,\dots,x_N)$ as
(Figure \ref{picturedualwavefunction})
\begin{align}
&\overline{W}_{M,N}(z_1,\dots,z_N|w_1,\dots,w_M|\overline{x_1},\dots,\overline{x_N}) \nonumber \\
&=\langle 1^M|\mathcal{B}(z_1|w_1,\dots,w_M) \cdots
\mathcal{B}(z_N|w_1,\dots,w_M)|\overline{x_1} \cdots \overline{x_N} \rangle.
\end{align}

\section{The simplest case}
In this section, we examine the simplest case $N=1$.
A special case of the result obtained in this section will be used
in the next section as the initial condition
for the Izergin-Korepin analysis on the wavefunctions under
reflecting boundary. First, we prepare the following lemma.

\begin{lemma} \label{identitylemma}
The following identity holds
\begin{align}
&(a^2-a^{-2})(z^2-z^{-2})\sum_{j=1}^{x_1-1}w_j
\prod_{k=1}^{j-1}(az^{-1}-a^{-1}zw_k)(az^{-1}w_k-a^{-1}z) \nonumber \\
\times&\prod_{k=j+1}^{x_1-1}(azw_k-a^{-1}z^{-1})(az-a^{-1}z^{-1}w_k) \nonumber
\\
=&\prod_{k=1}^{x_1-1}(azw_k-a^{-1}z^{-1})(az-a^{-1}z^{-1}w_k)
-\prod_{k=1}^{x_1-1}(az^{-1}w_k-a^{-1}z)(az^{-1}-a^{-1}z w_k)
. \label{identity}
\end{align}
\end{lemma}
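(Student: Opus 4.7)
The plan is to recognize the identity as a telescoping sum in disguise. Introduce the shorthand
\begin{align*}
\alpha_k &:= (azw_k-a^{-1}z^{-1})(az-a^{-1}z^{-1}w_k), \\
\beta_k  &:= (az^{-1}w_k-a^{-1}z)(az^{-1}-a^{-1}z w_k),
\end{align*}
so that the right-hand side of \eqref{identity} is $\prod_{k=1}^{x_1-1}\alpha_k-\prod_{k=1}^{x_1-1}\beta_k$, and the left-hand side is $(a^2-a^{-2})(z^2-z^{-2})\sum_{j=1}^{x_1-1}w_j\,\bigl(\prod_{k=1}^{j-1}\beta_k\bigr)\bigl(\prod_{k=j+1}^{x_1-1}\alpha_k\bigr)$ (after noting that the two scalar factors in the written product commute, so $(az^{-1}-a^{-1}zw_k)(az^{-1}w_k-a^{-1}z)=\beta_k$, and similarly for the other block).

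The crux is then a short local computation: verifying that
\begin{equation*}
\alpha_j-\beta_j=(a^{2}-a^{-2})(z^{2}-z^{-2})\,w_j.
\end{equation*}
Expanding both products one finds $\alpha_j=(a^2z^2+a^{-2}z^{-2})w_j-w_j^2-1$ and $\beta_j=(a^2z^{-2}+a^{-2}z^2)w_j-w_j^2-1$, whose difference factors exactly as claimed. This is the only nontrivial arithmetic step, and it is essentially forced by the structure of the $L$-operator entries.

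With this local identity in hand, I would invoke the standard telescoping formula. Setting $c_j:=\bigl(\prod_{k=1}^{j}\beta_k\bigr)\bigl(\prod_{k=j+1}^{x_1-1}\alpha_k\bigr)$ for $0\le j\le x_1-1$, so that $c_0=\prod_k\alpha_k$ and $c_{x_1-1}=\prod_k\beta_k$, one computes
\begin{equation*}
c_{j-1}-c_j=\bigl(\textstyle\prod_{k=1}^{j-1}\beta_k\bigr)(\alpha_j-\beta_j)\bigl(\textstyle\prod_{k=j+1}^{x_1-1}\alpha_k\bigr),
\end{equation*}
so that summing over $j$ gives $\prod_k\alpha_k-\prod_k\beta_k=\sum_j (c_{j-1}-c_j)$, and substituting the local identity yields precisely \eqref{identity}. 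The edge case $x_1=1$ is handled by the convention that empty products equal $1$ and empty sums equal $0$, making both sides vanish.

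I do not anticipate any real obstacle: there is no step harder than the two-line expansion of $\alpha_j-\beta_j$, and the rest is a textbook telescoping. The only thing to be careful about is bookkeeping: making sure the factor ordering in the original expression is consistent with the $\alpha_k$, $\beta_k$ groupings (which it is, since each $\alpha_k$ and $\beta_k$ is a scalar, hence commutative), and correctly indexing the endpoints of the two truncated products around $j$.
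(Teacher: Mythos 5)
Your proof is correct: the local identity $\alpha_j-\beta_j=(a^2-a^{-2})(z^2-z^{-2})w_j$ checks out by direct expansion, and the telescoping bookkeeping with $c_j$ is sound, including the empty-product conventions. The paper proves the lemma by induction on $x_1$, and its induction step hinges on exactly the same local identity (in the form $\alpha_{x_1}-(a^2-a^{-2})(z^2-z^{-2})w_{x_1}=\beta_{x_1}$), so your telescoping argument is essentially the paper's induction unrolled into a single sum.
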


\begin{proof}
This can be proved by induction on $x_1$.
It is easy to check the case $x_1=2$.
Let us assume \eqref{identity} holds
and show that the identity \eqref{identity} with $x_1$
replaced by $x_1+1$ also holds.
We first decompose the left hand side of \eqref{identity}
with $x_1$ replaced by $x_1+1$ as
\begin{align}
&(a^2-a^{-2})(z^2-z^{-2})\sum_{j=1}^{x_1}w_j
\prod_{k=1}^{j-1}(az^{-1}-a^{-1}zw_k)(az^{-1}w_k-a^{-1}z)
\nonumber \\
\times&\prod_{k=j+1}^{x_1}(azw_k-a^{-1}z^{-1})(az-a^{-1}z^{-1}w_k) \nonumber
\\
=&(a^2-a^{-2})(z^2-z^{-2})w_{x_1}\prod_{k=1}^{x_1-1}(az^{-1}-a^{-1}zw_k)(az^{-1}w_k-a^{-1}z) \nonumber \\
&+(azw_{x_1}-a^{-1}z^{-1})(az-a^{-1}z^{-1}w_{x_1})(a^2-a^{-2})(z^2-z^{-2})
\sum_{j=1}^{x_1-1}w_j \nonumber \\
&\times \prod_{k=1}^{j-1}(az^{-1}-a^{-1}zw_k)(az^{-1}w_k-a^{-1}z)
\prod_{k=j+1}^{x_1-1}(azw_k-a^{-1}z^{-1})(az-a^{-1}z^{-1}w_k).
\label{forrearrange}
\end{align}
We next apply the assumption \eqref{identity} to
the right hand side of \eqref{forrearrange} and rearrange as follows:
\begin{align}
&(a^2-a^{-2})(z^2-z^{-2})\sum_{j=1}^{x_1}w_j
\prod_{k=1}^{j-1}(az^{-1}-a^{-1}zw_k)(az^{-1}w_k-a^{-1}z)
\nonumber \\
\times&\prod_{k=j+1}^{x_1}(azw_k-a^{-1}z^{-1})(az-a^{-1}z^{-1}w_k) \nonumber
\\
=&(a^2-a^{-2})(z^2-z^{-2})w_{x_1}\prod_{k=1}^{x_1-1}(az^{-1}-a^{-1}zw_k)(az^{-1}w_k-a^{-1}z) \nonumber \\
&+(azw_{x_1}-a^{-1}z^{-1})(az-a^{-1}z^{-1}w_{x_1})
\nonumber \\
&\times \Bigg(
\prod_{k=1}^{x_1-1}(azw_k-a^{-1}z^{-1})(az-a^{-1}z^{-1}w_k)
-\prod_{k=1}^{x_1-1}(az^{-1}w_k-a^{-1}z)(az^{-1}-a^{-1}z w_k)
\Bigg) \nonumber 
\end{align}
\begin{align}
=&\prod_{k=1}^{x_1}(azw_k-a^{-1}z^{-1})(az-a^{-1}z^{-1}w_k)
-\prod_{k=1}^{x_1-1}(az^{-1}w_k-a^{-1}z)(az^{-1}-a^{-1}z w_k)
\nonumber \\
&\times \{ (azw_{x_1}-a^{-1}z^{-1})(az-a^{-1}z^{-1}w_{x_1})
-(a^2-a^{-2})(z^2-z^{-2})w_{x_1} \} \nonumber \\
=&\prod_{k=1}^{x_1}(azw_k-a^{-1}z^{-1})(az-a^{-1}z^{-1}w_k)
-\prod_{k=1}^{x_1}(az^{-1}w_k-a^{-1}z)(az^{-1}-a^{-1}z w_k).
\end{align}
Hence we find the identity \eqref{identity} with $x_1$
replaced by $x_1+1$ also holds.

\end{proof}

\begin{proposition} \label{simplestproposition}
The wavefunction $W_{M,1}(z|w_1,\dots,w_M|x_1)$
is explicitly expressed as
\begin{align}
&W_{M,1}(z|w_1,\dots,w_M|x_1)
=(a^2-a^{-2})(-a^2 z^2+a^{-2} z^{-2})
\sum_{\tau=\pm 1} \frac{1}{z^{2 \tau}-z^{-2 \tau}}
(ba^{-1}z^{-\tau}-b^{-1}az^\tau) \nonumber \\
&\times \prod_{j=1}^M (az^{-\tau}-a^{-1}z^\tau w_j)
\prod_{j=1}^{x_1-1}(az^{-\tau}w_j-a^{-1}z^\tau)
\prod_{j=x_1+1}^M (az^\tau-a^{-1}z^{-\tau}w_j). \label{simplestmatrixelement}
\end{align}
\end{proposition}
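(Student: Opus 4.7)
The plan is to compute $W_{M,1}(z|w_1,\dots,w_M|x_1)=\langle x_1|\mathcal{B}(z|w_1,\dots,w_M)|0^M\rangle$ directly from the decomposition
\begin{align*}
\mathcal{B}(z)=(ba^{-1}z^{-1}-b^{-1}az)D(z^{-1})B(z)+(baz-b^{-1}a^{-1}z^{-1})B(z^{-1})D(z)
\end{align*}
already exhibited in Section 2. The summand $(baz-b^{-1}a^{-1}z^{-1})B(z^{-1})D(z)|0^M\rangle$ is evaluated at once using the eigenvalue relation $D(u)|0^M\rangle=\prod_{j=1}^M(au-a^{-1}u^{-1}w_j)|0^M\rangle$ (read off from keeping all auxiliary indices at $|1\rangle$ along the whole row) together with the single-configuration identity $\langle x_1|B(u)|0^M\rangle=(a^2-a^{-2})\prod_{j<x_1}(au^{-1}w_j-a^{-1}u)\prod_{j>x_1}(au-a^{-1}u^{-1}w_j)$ obtained from the unique lattice path in which the auxiliary line of $B$ flips at column $x_1$. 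Setting $u=z^{-1}$ supplies the bulk of the $\tau=-1$ summand of \eqref{simplestmatrixelement}.

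The harder piece $\langle x_1|D(z^{-1})B(z)|0^M\rangle$ I would treat by expanding $B(z)|0^M\rangle=\sum_{y=1}^M\langle y|B(z)|0^M\rangle\,|y\rangle$ and analysing the matrix element $\langle x_1|D(z^{-1})|y\rangle$ through a careful inspection of the allowed auxiliary-line configurations for the $D$-operator. Since in the $L$-matrix entries a leftward rise of the auxiliary line is only compatible with a quantum flip $1\to 0$ at that column and a leftward drop only with a flip $0\to 1$, there are essentially three cases: (i) for $y>x_1$ no configuration is possible, so the matrix element vanishes; (ii) for $y=x_1$ the only contribution comes from the auxiliary line kept at $|1\rangle$ throughout, producing $(azw_{x_1}-a^{-1}z^{-1})\prod_{j\neq x_1}(az^{-1}-a^{-1}zw_j)$; (iii) for $y<x_1$ the unique configuration is a single ``dip'' with drop at column $x_1$ and rise at column $y$, producing a factor $(a^2-a^{-2})^2w_y$ times explicit products of $L(z^{-1})$-entries over the three intervals $k<y$, $y<k<x_1$ and $k>x_1$.

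Once these matrix elements are substituted, the sum $\sum_{y=1}^{x_1-1}\langle y|B(z)|0^M\rangle\langle x_1|D(z^{-1})|y\rangle$ has, after pulling out the factors independent of $y$, exactly the structure of the left-hand side of Lemma \ref{identitylemma} with $y$ playing the role of the dummy index. Applying Lemma \ref{identitylemma} rewrites it as $((a^2-a^{-2})(z^2-z^{-2}))^{-1}$ times the difference of the two products $\prod_{k<x_1}(azw_k-a^{-1}z^{-1})(az-a^{-1}z^{-1}w_k)$ and $\prod_{k<x_1}(az^{-1}w_k-a^{-1}z)(az^{-1}-a^{-1}zw_k)$, which are precisely the column products appearing in the $\tau=-1$ and $\tau=+1$ summands of \eqref{simplestmatrixelement}.

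It then remains to assemble the three pieces, weighted by the $K$-matrix prefactors $(ba^{-1}z^{-1}-b^{-1}az)$ and $(baz-b^{-1}a^{-1}z^{-1})$, and to clear the common factor $\prod_{j>x_1}(az-a^{-1}z^{-1}w_j)(az^{-1}-a^{-1}zw_j)$. The coefficient of the first product combines the $B(z^{-1})D(z)$ contribution with the corresponding output of Lemma \ref{identitylemma} and reorganises, via a short rational identity in $a,b,z$, into the $\tau=-1$ coefficient of \eqref{simplestmatrixelement}; the coefficient of the second product combines the $y=x_1$ term of case (ii) with the remaining output of Lemma \ref{identitylemma} and reduces, via the polynomial identity $(z^2-z^{-2})(azw_{x_1}-a^{-1}z^{-1})-(a^2-a^{-2})(az-a^{-1}z^{-1}w_{x_1})=(-a^2z^2+a^{-2}z^{-2})(az^{-1}-a^{-1}zw_{x_1})$, into the $\tau=+1$ coefficient. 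The main obstacle is therefore not this final algebraic verification but the case analysis for $\langle x_1|D(z^{-1})|y\rangle$; Lemma \ref{identitylemma} is in essence designed precisely to repackage the $y<x_1$ sum that arises there into the two-product form demanded by \eqref{simplestmatrixelement}.
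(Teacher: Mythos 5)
Your proposal follows essentially the same route as the paper: the same splitting of $\mathcal{B}(z)$ into the $D(z^{-1})B(z)$ and $B(z^{-1})D(z)$ pieces, the same three-case evaluation of the matrix elements $\langle x_1|D(z^{-1})|y\rangle$ (vanishing for $y>x_1$, diagonal term for $y=x_1$, and the $(a^2-a^{-2})^2 w_y$ ``dip'' terms for $y<x_1$), and the same reduction of the final verification to Lemma \ref{identitylemma}. The argument is correct; you merely make explicit some of the ``tedious but straightforward'' recombination that the paper leaves to the reader.
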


\begin{proof}
Decomposing $W_{M,1}(z|w_1,\dots,w_M|x_1)$ as
\begin{align}
&W_{M,1}(z|w_1,\dots,w_M|x_1) \nonumber \\
=&(baz-b^{-1}a^{-1}z^{-1})
\langle x_1 |B(z^{-1}|w_1,\dots,w_M)| 0^M \rangle
\langle 0^M |D(z|w_1,\dots,w_M)| 0^M \rangle \nonumber \\
&+(ba^{-1}z^{-1}-b^{-1}az)
\langle x_1 |D(z^{-1}|w_1,\dots,w_M)| x_1 \rangle
\langle x_1 |B(z|w_1,\dots,w_M)| 0^M \rangle \nonumber \\
&+(ba^{-1}z^{-1}-b^{-1}az)
\sum_{j=1}^{x_1-1}
\langle x_1 |D(z^{-1}|w_1,\dots,w_M)| j \rangle
\langle j |B(z|w_1,\dots,w_M)| 0^M \rangle,
\end{align}
and using the matrix elements
\begin{align}
&\langle 0^M |D(z|w_1,\dots,w_M)| 0^M \rangle
=\prod_{k=1}^M (az-a^{-1}z^{-1}w_k), \nonumber \\
&\langle x_1 |B(z|w_1,\dots,w_M)| 0^M \rangle
=(a^2-a^{-2}) \prod_{k=x_1+1}^M (az-a^{-1}z^{-1}w_k) \prod_{k=1}^{x_1-1}(az^{-1}w_k-a^{-1}z), \nonumber \\
&\langle j |B(z|w_1,\dots,w_M)| 0^M \rangle \nonumber \\
=&(a^2-a^{-2})\prod_{k=j+1}^M (az-a^{-1}z^{-1}w_k) \prod_{k=1}^{j-1}(az^{-1}w_k-a^{-1}z), \ j=1,\dots,x_1-1, \nonumber \\
&\langle x_1 |B(z^{-1}|w_1,\dots,w_M)| 0^M \rangle
=(a^2-a^{-2})\prod_{k=x_1+1}^M (az^{-1}-a^{-1}zw_k) \prod_{k=1}^{x_1-1}(azw_k-a^{-1}z^{-1}), \nonumber \\
&\langle x_1 |D(z^{-1}|w_1,\dots,w_M)| x_1 \rangle
=(azw_{x_1}-a^{-1}z^{-1}) \prod_{k=x_1+1}^M (az^{-1}-a^{-1}zw_k) \prod_{k=1}^{x_1-1}(az^{-1}-a^{-1}zw_k), \nonumber 
\end{align}
\begin{align}
&\langle x_1 |D(z^{-1}|w_1,\dots,w_M)| j \rangle
=(a^2-a^{-2})^2 w_j \prod_{k=j+1}^{x_1-1}(azw_k-a^{-1}z^{-1}) \nonumber \\
&\times \prod_{k=x_1+1}^M (az^{-1}-a^{-1}zw_k) \prod_{k=1}^{j-1}(az^{-1}-a^{-1}zw_k)
, \ j=1,\dots,x_1-1, \nonumber
\end{align}
we can explicitly calculate
$W_{M,1}(z|w_1,\dots,w_M|x_1)$ as
\begin{align}
&W_{M,1}(z|w_1,\dots,w_M|x_1)
=(a^2-a^{-2})\prod_{k=x_1+1}^{M} (az^{-1}-a^{-1}zw_k)(az-a^{-1}z^{-1}w_k)
\nonumber \\
&\times \Bigg\{ (baz-b^{-1}a^{-1}z^{-1})
\prod_{k=1}^{x_1-1}(azw_k-a^{-1}z^{-1})
\prod_{k=1}^{x_1}(az-a^{-1}z^{-1}w_k) \nonumber \\
&+(ba^{-1}z^{-1}-b^{-1}az)(azw_{x_1}-a^{-1}z^{-1})
\prod_{k=1}^{x_1-1}(az^{-1}-a^{-1}zw_k)
(az^{-1}w_k-a^{-1}z) \nonumber \\
&+(a^2-a^{-2})^2 \sum_{j=1}^{x_1-1}w_j (ba^{-1}z^{-1}-b^{-1}az)
\prod_{k=1}^{j-1} (az^{-1}-a^{-1}zw_k)(az^{-1}w_k-a^{-1}z) \nonumber \\
&\times
\prod_{k=j+1}^{x_1-1}(azw_k-a^{-1}z^{-1})
\prod_{k=j+1}^{x_1}(az-a^{-1}z^{-1}w_k)
\Bigg\}. \label{rhssimplest}
\end{align}
One can show by tedious but straightforward computation that
proving the right hand side of \eqref{rhssimplest}
is equal to
\begin{align}
&(a^2-a^{-2})(-a^2 z^2+a^{-2} z^{-2})
\sum_{\tau=\pm 1} \frac{1}{z^{2 \tau}-z^{-2 \tau}}
(ba^{-1}z^{-\tau}-b^{-1}az^\tau) \nonumber \\
&\times \prod_{j=1}^M (az^{-\tau}-a^{-1}z^\tau w_j)
\prod_{j=1}^{x_1-1}(az^{-\tau}w_j-a^{-1}z^\tau)
\prod_{j=x_1+1}^M (az^\tau-a^{-1}z^{-\tau}w_j),
\end{align}
reduces to proving the equality \eqref{identity}
in Lemma \ref{identitylemma}, which we have already proved.
\end{proof}
The dual wavefunction $\overline{W}_{M,1}(z|w_1,\dots,w_M|\overline{x_1})$
can be computed in the same way.
We state the result below.
\begin{proposition}
The dual wavefunction $\overline{W}_{M,1}(z|w_1,\dots,w_M|\overline{x_1})$
is explicitly expressed as
\begin{align}
&\overline{W}_{M,1}(z|w_1,\dots,w_M|\overline{x_1})
=(a^2-a^{-2})(-a^2 z^2+a^{-2} z^{-2})
\sum_{\tau=\pm 1} \frac{1}{z^{2 \tau}-z^{-2 \tau}}
(baz^{\tau}-b^{-1}a^{-1}z^{-\tau}) \nonumber \\
&\times \prod_{j=1}^M (az^{-\tau}w_j-a^{-1}z^\tau)
\prod_{j=1}^{\overline{x_1}-1}(az^{-\tau}-a^{-1}z^\tau w_j)
\prod_{j=\overline{x_1}+1}^M (az^\tau w_j-a^{-1}z^{-\tau}).
\end{align}
\end{proposition}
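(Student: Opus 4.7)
The plan is to mirror the strategy used for Proposition \ref{simplestproposition}. I would first expand the dual wavefunction using the decomposition
\begin{align*}
\mathcal{B}(z|w_1,\dots,w_M)
=&(ba^{-1}z^{-1}-b^{-1}az)D(z^{-1}|w_1,\dots,w_M)B(z|w_1,\dots,w_M)\\
&+(baz-b^{-1}a^{-1}z^{-1})B(z^{-1}|w_1,\dots,w_M)D(z|w_1,\dots,w_M),
\end{align*}
and inserting a complete set of intermediate states. Since $B(z)$ raises the number of $1$'s in the quantum space by one, acting on $|\overline{x_1}\rangle$ it forces the intermediate state to be $|1^M\rangle$, and so the first term collapses to a single product. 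In the second term, $D(z)$ preserves particle number, so the intermediate-state sum ranges over one-hole states $|\overline{j}\rangle$ with the admissible positions governed by the ice rule of the $L$-operator.

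Next, I would write down explicit expressions for the matrix elements
\begin{align*}
\langle 1^M|B(z)|\overline{x_1}\rangle,\quad \langle 1^M|D(z)|1^M\rangle,\quad \langle \overline{x_1}|D(z)|\overline{x_1}\rangle,\quad \langle \overline{j}|D(z)|\overline{x_1}\rangle,
\end{align*}
together with their $z\to z^{-1}$ counterparts, by tracing the unique admissible ice configuration for each choice of boundary data. These are products of vertex weights whose combinatorial structure is parallel to the table in the proof of Proposition \ref{simplestproposition}, but with the particle-propagation weights of the $L$-operator replaced by the hole-propagation weights; this produces exactly the three product families $\prod_{j=1}^{M}(az^{-\tau}w_j-a^{-1}z^\tau)$, $\prod_{j=1}^{\overline{x_1}-1}(az^{-\tau}-a^{-1}z^\tau w_j)$, and $\prod_{j=\overline{x_1}+1}^{M}(az^\tau w_j-a^{-1}z^{-\tau})$ appearing in the statement.

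Assembling the three contributions and factoring out the common product over $k>\overline{x_1}$, verification of the claim reduces to establishing an identity for the coefficient of $(a^2-a^{-2})^2$ that has exactly the shape of \eqref{identity} with $x_1$ replaced by $\overline{x_1}$; no new combinatorial input is required beyond Lemma \ref{identitylemma}. The main obstacle I foresee is purely bookkeeping: correctly matching the sign-flipped $K$-operator coefficients $(ba^{-1}z^{-\tau}-b^{-1}az^\tau)\leftrightarrow(baz^{\tau}-b^{-1}a^{-1}z^{-\tau})$ that distinguish $W_{M,1}$ from $\overline{W}_{M,1}$, and ensuring that the $w_j$-placement in each diagonal vertex weight is kept consistent with the hole-sector conventions throughout, so that the final rearrangement into the symmetric $\sum_{\tau=\pm 1}$ form matches the claimed expression.
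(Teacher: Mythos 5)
Your proposal is correct and follows essentially the same route as the paper, which proves this proposition only by remarking that it ``can be computed in the same way'' as Proposition \ref{simplestproposition}: decompose the double-row $B$-operator into the $D(z^{-1})B(z)$ and $B(z^{-1})D(z)$ pieces, insert intermediate states (forced to $|1^M\rangle$ in the first piece and ranging over one-hole states in the second), read off the frozen matrix elements, and reduce the final rearrangement to the telescoping identity of Lemma \ref{identitylemma}. Nothing beyond the bookkeeping you already flag is needed.
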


\section{Izergin-Korepin analysis}
In this section, we perform the Izergin-Korepin analysis \cite{Ko,Iz}
which uniquely characterizes the wavefunctions.
See \cite{Motr} for a simpler case of the Izergin-Korepin anlaysis
on the wavefunctions without reflecting boundary.
The Proposition given below is the extension to the
reflecting boundary condition.

\begin{proposition}
\label{ordinarypropertiesfordomainwallboundarypartitionfunction}
The wavefunctions
$W_{M,N}(z_1,\dots,z_N|w_1,\dots,w_M|x_1,\dots,x_N)$
satisfies the following properties. \\
\\
 (1) When $x_N=M$, the wavefunctions
$W_{M,N}(z_1,\dots,z_N|w_1,\dots,w_M|x_1,\dots,x_N)$
is a polynomial of degree $2N-1$ in $w_M$.
\\
 (2) The wavefunctions $W_{M,N}(z_1,\dots,z_N|w_1,\dots,w_M|x_1,\dots,x_N)$
is symmetric with respect to $z_1,\dots,z_N$, i.e.,
\begin{align}
W_{M,N}(z_1,\dots,z_N|w_1,\dots,w_M|x_1,\dots,x_N)
=W_{M,N}(z_{\sigma(1)},\dots,z_{\sigma(N)}|w_1,\dots,w_M|x_1,\dots,x_N),
\label{symmetrywavefunction}
\end{align}
for $\sigma \in S_N$.
\\
(3)
The wavefunctions $W_{M,N}(z_1,\dots,z_N|w_1,\dots,w_M|x_1,\dots,x_N)$
with $z_i$ replaced by $z_i^{-1}$ is connected with the original one
by
\begin{align}
\frac{W_{M,N}(z_1,\dots,z_N|w_1,\dots,w_M|x_1,\dots,x_N)|_{
z_i \longleftrightarrow z_i^{-1}}}
{W_{M,N}(z_1,\dots,z_N|w_1,\dots,w_M|x_1,\dots,x_N)}
=\frac{a^2 z_i^{-2}-a^{-2}z_i^2}{a^2 z_i^2-a^{-2} z_i^{-2}}.
\label{permutationwavefunction}
\end{align}
\\
(4) The following recursive relations between the
wavefunctions hold if $x_N=M$
(Figure \ref{izerginkorepinone}):
\begin{align}
&W_{M,N}(z_1,\dots,z_N|w_1,\dots,w_M|x_1,\dots,x_N)
|_{w_M=a^2 z_N^2}
\nonumber \\
=&(a^2-a^{-2})(ba^{-1}z_N^{-1}-b^{-1}az_N)
\prod_{j=1}^{N}(a^3 z_N^2 z_j-a^{-1}z_j^{-1})
\prod_{j=1}^{N-1} (a^3 z_N^2 z_j^{-1}-a^{-1}z_j) \nonumber \\
&\times \prod_{j=1}^{M-1} (az_N^{-1}w_j-a^{-1}z_N)(az_N^{-1}-a^{-1}z_N w_j)
\nonumber \\
&\times W_{M-1,N-1}(z_1,\dots,z_{N-1}|w_1,\dots,w_{M-1}|x_1,\dots,x_{N-1})
. \label{ordinaryrecursionwavefunction}
\end{align}

If $x_N \neq M$, the following factorizations hold for the wavefunctions
(Figure \ref{izerginkorepintwo}):
\begin{align}
&W_{M,N}(z_1,\dots,z_N|w_1,\dots,w_M|x_1,\dots,x_N)
 \nonumber \\
=&\prod_{j=1}^N (az_j^{-1}-a^{-1}z_j w_M)(az_j-a^{-1}z_j^{-1}w_M)
W_{M-1,N}(z_1,\dots,z_N|w_1,\dots,w_{M-1}|x_1,\dots,x_N).
\label{ordinaryrecursionwavefunction2}
\end{align}
\\
(5) The following holds for the case $N=1$, $x_1=M$
\begin{align}
&W_{M,1}(z|w_1,\dots,w_M|M)
=(a^2-a^{-2})(-a^2 z^2+a^{-2} z^{-2})
\sum_{\tau=\pm 1} \frac{1}{z^{2 \tau}-z^{-2 \tau}}
(ba^{-1}z^{-\tau}-b^{-1}az^\tau) \nonumber \\
&\times \prod_{j=1}^M (az^{-\tau}-a^{-1}z^\tau w_j)
\prod_{j=1}^{M-1}(az^{-\tau}w_j-a^{-1}z^\tau).
\label{ordinaryinitialrecursion}
\end{align}
\end{proposition}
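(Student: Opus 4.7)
The strategy is the Izergin-Korepin one adapted to the reflecting boundary: (1)--(3) supply the analytic skeleton, (4) supplies enough special values of $w_M$ to pin down the polynomial, and (5) is the base case that later starts the induction on $N$. For (1), I would track the $w_M$-dependence: each $\mathcal{B}(z_j)$ contains exactly two $L$-operators at site $M$, one from $T_{a_2}(z_j^{-1})$ and one from $T_{a_1}(z_j)$, and each entry of $L_{aM}$ is at most linear in $w_M$, so the wavefunction is a polynomial in $w_M$ of degree at most $2N$. For the leading $w_M^{2N}$ coefficient to survive, every one of the $2N$ site-$M$ vertices must contribute its $w_M$-linear term, which forbids the unique $w_M$-independent entry $(1,0)\to(0,1)$; but this entry is the only one that takes the quantum state at $M$ from $0$ to $1$, so excluding it keeps the quantum state at $M$ stuck at $0$ and violates $x_N = M$. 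Hence the leading coefficient vanishes and the degree is at most $2N-1$. Property (5) is Proposition \ref{simplestproposition} specialized to $x_1 = M$ and requires no further argument.

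For (2), I would invoke Sklyanin's reflection equation for $\mathcal{T}_{a_2 a_1}(z)$, which yields the commutation $\mathcal{B}(z_i)\mathcal{B}(z_j) = \mathcal{B}(z_j)\mathcal{B}(z_i)$; since the wavefunction is $\langle x_1\cdots x_N|\mathcal{B}(z_1)\cdots\mathcal{B}(z_N)|0^M\rangle$, symmetry in $z_1,\dots,z_N$ is immediate. For (3), I would first establish the operator identity
\begin{align}
\mathcal{B}(z^{-1}|w_1,\dots,w_M) = \frac{a^2 z^{-2}-a^{-2}z^2}{a^2 z^2-a^{-2}z^{-2}}\,\mathcal{B}(z|w_1,\dots,w_M), \nonumber
\end{align}
starting from the decomposition of $\mathcal{B}(z)$ in terms of $B(z^{\pm 1})$ and $D(z^{\mp 1})$ displayed earlier and applying the standard $U_q(sl_2)$ RTT exchange relations for $B$ and $D$ at inverse spectral parameters; inserting this identity for the single factor $\mathcal{B}(z_i)$ inside the product then yields (3).

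The substance of the proof is in (4). For $x_N \neq M$, the quantum spin at site $M$ never flips, so both site-$M$ vertices inside each $\mathcal{B}(z_j)$ must be spin-preserving; summing over the auxiliary alternatives produces exactly the factor $(az_j^{-1}-a^{-1}z_j w_M)(az_j-a^{-1}z_j^{-1}w_M)$ per $j$ and leaves the remaining network as $W_{M-1,N}$. For $x_N = M$, the key observation at $w_M = a^2 z_N^2$ is that the $k = M$ factor of $\langle 0^M|D(z_N|w_1,\dots,w_M)|0^M\rangle = \prod_k(az_N-a^{-1}z_N^{-1}w_k)$ vanishes, so the $B(z_N^{-1})D(z_N)$ term of $\mathcal{B}(z_N)$ drops out and only $(ba^{-1}z_N^{-1}-b^{-1}az_N)D(z_N^{-1})B(z_N)$ survives. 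Pushing this surviving piece through $\mathcal{B}(z_1)\cdots\mathcal{B}(z_{N-1})$ via the reflection-algebra exchange relations produces the $K$-factor $(ba^{-1}z_N^{-1}-b^{-1}az_N)$ and the overall $(a^2-a^{-2})$; the frozen site-$M$ column contributes $\prod_{j=1}^{M-1}(az_N^{-1}w_j-a^{-1}z_N)(az_N^{-1}-a^{-1}z_N w_j)$; and the crossings of the $N$th double row with the remaining $N-1$ auxiliaries supply $\prod_{j=1}^{N}(a^3z_N^2 z_j-a^{-1}z_j^{-1})\prod_{j=1}^{N-1}(a^3z_N^2 z_j^{-1}-a^{-1}z_j)$. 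What is left behind is precisely $W_{M-1,N-1}(z_1,\dots,z_{N-1}|w_1,\dots,w_{M-1}|x_1,\dots,x_{N-1})$.

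The main obstacle will be the bookkeeping in (4), in particular collecting the $(a^3 z_N^2 z_j^{\pm 1}-a^{-1}z_j^{\mp 1})$ crossing factors correctly from the RTT commutators between the surviving $D(z_N^{-1})B(z_N)$ and the other $\mathcal{B}(z_j)$'s. Once this is in hand, the Izergin-Korepin characterization closes: a polynomial of degree $2N-1$ in $w_M$ is determined by $2N$ values, and combining (2), (3), and (4) at $w_M = a^2 z_j^{\pm 2}$ for $j = 1,\dots,N$ supplies precisely those $2N$ values.
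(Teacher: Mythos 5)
Your plan follows the paper's Izergin--Korepin skeleton almost exactly: (1) by counting the $2N$ site-$M$ vertices and noting that the sole $w_M$-independent entry $a^2-a^{-2}$ is the only quantum-raising one and hence must occur when $x_N=M$; (2) from commutativity of the double-row $B$-operators; (5) as the $x_1=M$ case of Proposition \ref{simplestproposition}; and the $x_N\neq M$ case of (4) by freezing the site-$M$ column via the ice rule. Two remarks on where you diverge. For (3), besides the boundary-crossing operator identity you propose (which is the Tsuchiya--Kuperberg route the paper cites), the paper records a lighter alternative: insert a resolution of the identity before the last $\mathcal{B}(z_N)$, observe from the explicit $N=1$ formula \eqref{simplestmatrixelement} that the ratio under $z_N\to z_N^{-1}$ is independent of the intermediate state, and then invoke symmetry (2); this avoids proving $\mathcal{B}(z^{-1})\propto\mathcal{B}(z)$ as an operator identity. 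For (4) with $x_N=M$, your opening observation --- that the factor $az_N-a^{-1}z_N^{-1}w_M$ in $D(z_N)|0^M\rangle$ vanishes at $w_M=a^2z_N^2$ --- is the correct algebraic shadow of the paper's statement that the bottom double row freezes, but your plan to extract the factors $\prod_{j}(a^3z_N^2z_j-a^{-1}z_j^{-1})\prod_{j}(a^3z_N^2z_j^{-1}-a^{-1}z_j)$ from reflection-algebra commutators is off target: the double rows never cross one another in this lattice, and these factors are just the weights $az^{-1}w_M-a^{-1}z$ of the frozen site-$M$ vertices of the remaining $N-1$ double rows (forced by the ice rule to preserve both auxiliary and quantum indices once the bottom row has created the particle at $M$), evaluated at $z=z_j^{\pm 1}$ and $w_M=a^2z_N^2$; likewise $\prod_{j=1}^{M-1}(az_N^{-1}w_j-a^{-1}z_N)(az_N^{-1}-a^{-1}z_Nw_j)$ comes from the frozen bottom double row over sites $1,\dots,M-1$, not from the frozen column. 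Running the purely graphical freezing argument, as the paper does, yields all factors at once and dispenses with the exchange-relation bookkeeping you flag as the main obstacle.
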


\begin{figure}[ht]
\includegraphics[width=12cm]{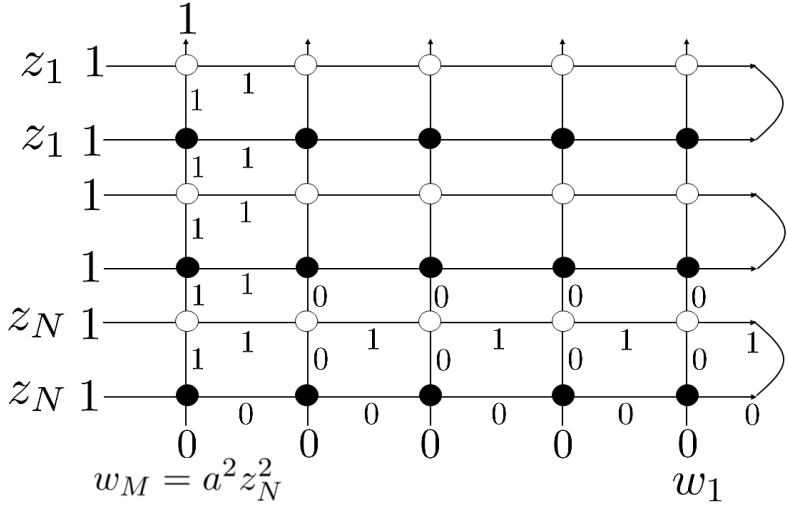}
\caption{A graphical description of the relation
\eqref{ordinaryrecursionwavefunction}.
By using the ice-rule,
one can see that if one sets $w_M$ to $w_M=a^2 z_N^2$,
all the $L$-operators at the
leftmost column and the bottom double-row
freeze.
}
\label{izerginkorepinone}
\end{figure}

\begin{proof}
This can be proved in the standard way.
Properties (2) and (3) can be proved in the same way with
Tsuchiya \cite{Tsuchiya} and Kuperberg \cite{Ku2}.
Note that the state $\langle x_1,\dots,x_N|$ which is used for the
constuction of the wavefunctions
$W_{M,N}(z_1,\dots,z_N|w_1,\dots,w_M|x_1,\dots,x_N)$
do not have any affect on the arguments.
Property (3) can also be proved from Property (2) in the following way,
which have originally appeared in the discussion of another type of
six-vertex model \cite{dualsymplecticfelderhof}.
We insert the completeness relation to get
\begin{align}
&W_{M,N}(z_1,\dots,z_N|w_1,\dots,w_M|x_1,\dots,x_N) \nonumber \\
=&\sum_{x} \langle x_1,\dots,x_N|
\mathcal{B}(z_1|w_1,\dots,w_M) \cdots \mathcal{B}(z_{N-1}|w_1,\dots,w_M)| x \rangle
\langle x |\mathcal{B}(z_N|w_1,\dots,w_M)|0^M \rangle \nonumber \\
=&\sum_{x} \langle x_1,\dots,x_N|
\mathcal{B}(z_1|w_1,\dots,w_M) \cdots \mathcal{B}(z_{N-1}|w_1,\dots,w_M)| x \rangle
W_{M,1}(z_N|w_1,\dots,w_M|x). \label{decompositionforinversion}
\end{align}
From the explicit expression of $W_{M,1}(z_N|w_1,\dots,w_M|x)$
\eqref{simplestmatrixelement}, one can see
\begin{align}
\frac{W_{M,1}(z_N|w_1,\dots,w_M|x)|_{z_N \longleftrightarrow z_N^{-1}}}
{W_{M,1}(z_N|w_1,\dots,w_M|x)}
=\frac{a^2 z_N^{-2}-a^{-2} z_N^2}{a^2 z_N^2-a^{-2} z_N^{-2}}.
\label{simpleratio}
\end{align}
Since the ratio \eqref{simpleratio} does not depend on $x$,
combining \eqref{decompositionforinversion} and \eqref{simpleratio} gives
\begin{align}
\frac{W_{M,N}(z_1,\dots,z_N|w_1,\dots,w_M|x_1,\dots,x_N)|_{
z_N \longleftrightarrow z_N^{-1}}}
{W_{M,N}(z_1,\dots,z_N|w_1,\dots,w_M|x_1,\dots,x_N)}
=\frac{a^2 z_N^{-2}-a^{-2}z_N^2}{a^2 z_N^2-a^{-2} z_N^{-2}}.
\label{ratioforgeneric}
\end{align}
\eqref{permutationwavefunction} follows from \eqref{ratioforgeneric}
since the wavefunction is symmetric with respect to $z_1,\dots,z_N$
(Property (2)).

Property (5) is a special case $x_1=M$ of \eqref{simplestmatrixelement} in
Proposition \ref{simplestproposition} which is proved in the last section.

To show Property (1),
let us look at the leftmost column of the wavefunctions
since the dependence on the parameter $w_M$ comes from the
$L$-operators at this column.
When $x_N=M$,
one can see that we must use at least one of the matrix elements
${}_a \langle 0 | {}_M \langle 1 |  L_{aM}(z,w_M)|1 \rangle_a |0 \rangle_M=a^2-a^{-2}$ or
${}_a \langle 0 | {}_M \langle 1 |  L_{aM}(z^{-1},w_M)|1 \rangle_a |0 \rangle_M=a^2-a^{-2}$ of the $L$-operators
among the $2N$ $L$-operators at the leftmost
column. These matrix elements do not involve $w_M$,
from which it follows that the
wavefunctions $W_{M,N}(z_1,\dots,z_N|w_1,\dots,w_M|x_1,\dots,x_N)$
is a polynomial of degree $2N-1$ in $w_M$.

Property (4) can be easily shown with the help of the graphical
representation of the wavefunctions 
and the ice-rule of the $L$-operators of the six-vertex model
${}_a \langle \gamma  |{}_j \langle \delta | L_{aj}(z,w)|\alpha \rangle_a |\beta \rangle_j=0$ unless $\alpha+\beta=\gamma+\delta$
(Figures \ref{izerginkorepinone} and \ref{izerginkorepintwo}).
One finds that when $x_N=M$, the leftmost column and the bottom row
freeze if one sets $w_M$ to $w_M=a^2 z_N^2$. From the frozen part, one gets the factor
\begin{align}
&(a^2-a^{-2})(ba^{-1}z_N^{-1}-b^{-1}az_N)
\prod_{j=1}^{N}(a^3 z_N^2 z_j-a^{-1}z_j^{-1})
\prod_{j=1}^{N-1} (a^3 z_N^2 z_j^{-1}-a^{-1}z_j) \nonumber \\
&\times \prod_{j=1}^{M-1} (az_N^{-1}w_j-a^{-1}z_N)(az_N^{-1}-a^{-1}z_N w_j),
\end{align}
as the product of the matrix elements of the $L$-operators.
On the other hand, the remaining part is
$W_{M-1,N-1}(z_1,\dots,z_{N-1}|w_1,\dots,w_{M-1}|x_1,\dots,x_{N-1})$,
and one can see that \eqref{ordinaryrecursionwavefunction} follows.

When $x_N \neq M$,
one sees that the leftmost column freezes from which the factor
$\prod_{j=1}^N (az_j^{-1}-a^{-1}z_j w_M)(az_j-a^{-1}z_j^{-1}w_M)$
contributes to the wavefunctions, and the remaining part is nothing but
$W_{M-1,N}(z_1,\dots,z_N|w_1,\dots,w_{M-1}|x_1,\dots,x_N)$.
Hence, for the case $x_N \neq M$, the wavefunctions
$W_{M,N}(z_1,\dots,z_N|w_1,\dots,w_{M}|x_1,\dots,x_N)$
is the product of these two parts and
\eqref{ordinaryrecursionwavefunction2} follows.
\end{proof}

\begin{figure}[ht]
\includegraphics[width=12cm]{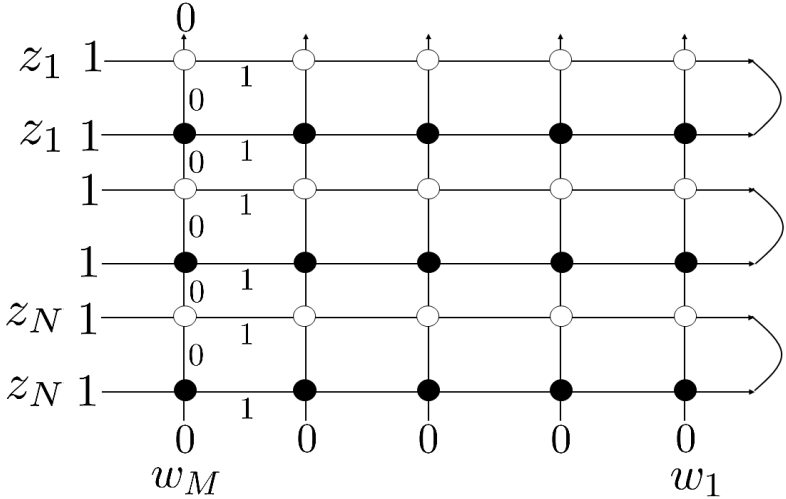}
\caption{A graphical description of the relation
\eqref{ordinaryrecursionwavefunction2}.
From the ice-rule,
one can see that all the $L$-operators at the
leftmost column freeze.
}
\label{izerginkorepintwo}
\end{figure}

At the end of this section, we list the property
for the dual wavefunctions, which can also be proved
in the same way with the wavefunctions.

\begin{proposition}
The dual wavefunctions
$\overline{W}_{M,N}(z_1,\dots,z_N|w_1,\dots,w_M|\overline{x_1},\dots,\overline{x_N})$
satisfies the following properties. \\
\\
 (1) When $\overline{x_N}=M$, the dual wavefunctions
$\overline{W}_{M,N}(z_1,\dots,z_N|w_1,\dots,w_M|\overline{x_1},\dots,\overline{x_N})$
is a polynomial of degree $2N-1$ in $w_M$.
\\
 (2) The dual wavefunctions $\overline{W}_{M,N}(z_1,\dots,z_N|w_1,\dots,w_M|\overline{x_1},\dots,\overline{x_N})$
is symmetric with respect to $z_1,\dots,z_N$, i.e.,
\begin{align}
\overline{W}_{M,N}(z_1,\dots,z_N|w_1,\dots,w_M|\overline{x_1},\dots,\overline{x_N})
=\overline{W}_{M,N}(z_{\sigma(1)},\dots,z_{\sigma(N)}|w_1,\dots,w_M|\overline{x_1},\dots,\overline{x_N}),
\end{align}
for $\sigma \in S_N$.
\\
(3) The dual wavefunctions $\overline{W}_{M,N}(z_1,\dots,z_N|w_1,\dots,w_M|\overline{x_1},\dots,\overline{x_N})$
with $z_i$ replaced by $z_i^{-1}$ is connected with the original one
by
\begin{align}
\frac{\overline{W}_{M,N}(z_1,\dots,z_N|w_1,\dots,w_M|\overline{x_1},\dots,\overline{x_N})|_{
z_i \longleftrightarrow z_i^{-1}}}
{\overline{W}_{M,N}(z_1,\dots,z_N|w_1,\dots,w_M|\overline{x_1},\dots,\overline{x_N})}
=\frac{a^2 z_i^{-2}-a^{-2}z_i^2}{a^2 z_i^2-a^{-2} z_i^{-2}}.
\end{align}
\\
(4) The following recursive relations between the
dual wavefunctions hold if $\overline{x_N}=M$:
\begin{align}
&\overline{W}_{M,N}(z_1,\dots,z_N|w_1,\dots,w_M|\overline{x_1},\dots,\overline{x_N})|_{w_M=a^{-2} z_N^{-2}}
\nonumber \\
=&(a^2-a^{-2})(baz_N-b^{-1}a^{-1}z_N^{-1})
\prod_{j=1}^{N}(a z_j-a^{-3}z_j^{-1} z_N^{-2})
\prod_{j=1}^{N-1} (az_j^{-1}-a^{-3}z_jz_N^{-2}) \nonumber \\
&\times \prod_{j=1}^{M-1} (az_N^{-1}-a^{-1}z_Nw_j)
(az_N^{-1}w_j-a^{-1}z_N)
\nonumber \\
&\times \overline{W}_{M-1,N-1}(z_1,\dots,z_{N-1}|w_1,\dots,w_{M-1}|
\overline{x_1},\dots,\overline{x_{N-1}})
.
\end{align}

If $\overline{x_N} \neq M$, the following factorizations hold for the dual wavefunctions:
\begin{align}
&\overline{W}_{M,N}(z_1,\dots,z_N|w_1,\dots,w_M|\overline{x_1},\dots,\overline{x_N})
 \nonumber \\
=&\prod_{j=1}^N (az_jw_M-a^{-1}z_j^{-1})(az_j^{-1}w_M-a^{-1}z_j)
\overline{W}_{M-1,N}(z_1,\dots,z_N|w_1,\dots,w_{M-1}|\overline{x_1},\dots,\overline{x_N}).
\end{align}
\\
(5) The following holds for the case $N=1$, $\overline{x_1}=M$
\begin{align}
&\overline{W}_{M,1}(z|w_1,\dots,w_M|M)
=(a^2-a^{-2})(-a^2 z^2+a^{-2} z^{-2})
\sum_{\tau=\pm 1} \frac{1}{z^{2 \tau}-z^{-2 \tau}}
(baz^{\tau}-b^{-1}a^{-1}z^{-\tau}) \nonumber \\
&\times \prod_{j=1}^M (az^{-\tau}w_j-a^{-1}z^{\tau})
\prod_{j=1}^{M-1}(az^{-\tau}-a^{-1}z^{\tau}w_j).
\end{align}
\end{proposition}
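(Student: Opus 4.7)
The plan is to mirror the proof of the previous Proposition for $W_{M,N}$, adapting each of the five claims to the dual setting in which the out-state $\langle 1^M|$ replaces $\langle x_1\cdots x_N|$ and the in-state $|\overline{x_1}\cdots\overline{x_N}\rangle$ replaces $|0^M\rangle$. Property (5) is immediate: it is exactly the $\overline{x_1}=M$ specialization of the explicit formula for $\overline{W}_{M,1}(z|w_1,\dots,w_M|\overline{x_1})$ stated at the end of Section 3. Property (2) follows by the standard reflection-equation argument: the double-row monodromy matrix satisfies the Sklyanin commutation relations, from which $\mathcal{B}(z|\cdots)\mathcal{B}(z'|\cdots)$ is symmetric in $z,z'$ after taking matrix elements between weight-compatible states; since $\langle 1^M|$ and $|\overline{x_1}\cdots\overline{x_N}\rangle$ are $S^z$-eigenvectors whose weights match the $B$-operator shift, exactly the same commutation computation used by Tsuchiya and Kuperberg applies with no modification coming from the specific choice of boundary states.

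For Property (1), I would inspect the leftmost column of the graphical representation: the $w_M$-dependence enters only through the $2N$ operators $L_{aM}(z_i^{\pm 1},w_M)$ sitting there. When $\overline{x_N}=M$, the ice rule combined with the boundary value $1$ at site $M$ of $\langle 1^M|$ forces at least one of these vertices to be of the type whose Boltzmann weight is the $w_M$-independent constant $a^2-a^{-2}$, capping the degree at $2N-1$, and exhibiting one explicit surviving configuration shows the degree is exactly $2N-1$. For Property (4), the two cases arise from the same freezing phenomenon. When $\overline{x_N}\neq M$, the ice rule together with the boundary value $1$ at site $M$ forces the leftmost column configuration uniquely, and its Boltzmann weight evaluates to $\prod_{j=1}^N(az_jw_M-a^{-1}z_j^{-1})(az_j^{-1}w_M-a^{-1}z_j)$, leaving $\overline{W}_{M-1,N}$. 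When $\overline{x_N}=M$, specializing $w_M=a^{-2}z_N^{-2}$ makes certain $L$-operator weights at site $M$ vanish, freezing both the leftmost column and the bottom double row; collecting the surviving factors yields the stated prefactor and leaves $\overline{W}_{M-1,N-1}$.

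Finally, Property (3) follows from Properties (2) and (5) by the decomposition argument used for the wavefunction in Section 4: inserting a complete set of single-excitation states between $\mathcal{B}(z_{N-1}|\cdots)$ and $\mathcal{B}(z_N|\cdots)$ reduces $\overline{W}_{M,N}$ to a sum in which the only $z_N$-dependence sits in $\overline{W}_{M,1}(z_N|\cdots|x)$, whose $z_N\leftrightarrow z_N^{-1}$ ratio equals $(a^2z_N^{-2}-a^{-2}z_N^2)/(a^2z_N^2-a^{-2}z_N^{-2})$ independently of $x$ by direct inspection of the Section 3 formula; this ratio therefore factors out of the sum, and by Property (2) the same identity holds with $z_N$ replaced by any $z_i$. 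The main obstacle I anticipate is bookkeeping in Property (4): the freezing value is $w_M=a^{-2}z_N^{-2}$ rather than the $w_M=a^2z_N^2$ of the original wavefunction, because the swap $|0^M\rangle\leftrightarrow|1^M\rangle$ of quantum-space boundaries exchanges the roles of two sets of $L$-operator weights, so one must carefully verify which weight of which $L$-operator vanishes at the correct specialization in order to reproduce the prefactor $(baz_N-b^{-1}a^{-1}z_N^{-1})$ together with the products $(az_j-a^{-3}z_j^{-1}z_N^{-2})$ and $(az_j^{-1}-a^{-3}z_jz_N^{-2})$ exactly as stated.
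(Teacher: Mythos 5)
Your proposal follows essentially the same route as the paper: the paper proves the dual proposition in one line by declaring it analogous to the proof for $W_{M,N}$, and your adaptation of each of the five steps (explicit $N=1$ formula for (5), Sklyanin commutation for (2), ice-rule degree counting in the leftmost column for (1), freezing at the specialized $w_M$ for (4), and the completeness-relation reduction to the $N=1$ case for (3)) is exactly that analogue, including the correct observation that the freezing point becomes $w_M=a^{-2}z_N^{-2}$.

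One bookkeeping correction is needed in your Property (3) argument. For the dual wavefunction the ``vacuum-like'' reference state $\langle 1^M|$ sits on the \emph{left}, so inserting the completeness relation between $\mathcal{B}(z_{N-1})$ and $\mathcal{B}(z_N)$ produces the factor $\langle y|\mathcal{B}(z_N)|\overline{x_1}\cdots\overline{x_N}\rangle$ with $|y\rangle$ carrying $N-1$ holes, which is \emph{not} $\overline{W}_{M,1}$ for $N>1$. You should instead insert between $\mathcal{B}(z_1)$ and $\mathcal{B}(z_2)$ so that the isolated factor is $\langle 1^M|\mathcal{B}(z_1)|y\rangle=\overline{W}_{M,1}(z_1|w_1,\dots,w_M|y)$, obtain the ratio for $z_1\leftrightarrow z_1^{-1}$, and then transport it to every $z_i$ by Property (2) (equivalently, first commute $\mathcal{B}(z_N)$ to the leftmost slot). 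With that adjustment the argument goes through exactly as in the paper's treatment of $W_{M,N}$.
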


\section{Symmetric functions}
We introduce symmetric functions in this section and show that they
represent the (dual) wavefunctions
of the $U_q(sl_2)$ six-vertex model under reflecting boundary.
We also compare the homogeneous limit of the symmetric functions
with the coordinate Bethe ansatz wavefunctions
for the open XXZ chain by Alcaraz-Barber-Batchelor-Baxter-Quispel \cite{ABBBQ}.

\begin{definition}
We define the following symmetric function \\
$F_{M,N}(z_1,\dots,z_N|w_1,\dots,w_M|x_1,\dots,x_N)$
which depends on the symmetric variables \\
$z_1,\dots,z_N$,
complex parameters $w_1,\dots,w_M$
and integers $x_1,\dots,x_N$ satisfying
$1 \le x_1 < \cdots < x_N \le M$,
\begin{align}
&F_{M,N}(z_1,\dots,z_N|w_1,\dots,w_M|x_1,\dots,x_N)
\nonumber \\
=&
(a^2-a^{-2})^N \prod_{j=1}^N (-a^2 z_j^2+a^{-2} z_j^{-2})
\sum_{\sigma \in S_N}
\sum_{\tau_1=\pm 1,\dots,\tau_N=\pm 1}
\prod_{j=1}^N \frac{1}{z_j^{2 \tau_j}-z_j^{-2 \tau_j}}
\nonumber \\
&\times \prod_{1 \le j < k \le N}
\frac{(a^2z_{\sigma(j)}^{\tau_{\sigma(j)}}
z_{\sigma(k)}^{\tau_{\sigma(k)}}
-a^{-2}z_{\sigma(j)}^{-\tau_{\sigma(j)}}
z_{\sigma(k)}^{-\tau_{\sigma(k)}}
)
(
a^2z_{\sigma(j)}^{-\tau_{\sigma(j)}}
z_{\sigma(k)}^{\tau_{\sigma(k)}}
-a^{-2}z_{\sigma(j)}^{\tau_{\sigma(j)}}
z_{\sigma(k)}^{-\tau_{\sigma(k)}}
)}
{(
z_{\sigma(j)}^{\tau_{\sigma(j)}}
z_{\sigma(k)}^{\tau_{\sigma(k)}}
-z_{\sigma(j)}^{-\tau_{\sigma(j)}}
z_{\sigma(k)}^{-\tau_{\sigma(k)}}
)
(
z_{\sigma(j)}^{-\tau_{\sigma(j)}}
z_{\sigma(k)}^{\tau_{\sigma(k)}}
-z_{\sigma(j)}^{\tau_{\sigma(j)}}
z_{\sigma(k)}^{-\tau_{\sigma(k)}}
)}
\nonumber \\
&\times
\prod_{k=1}^N (ba^{-1}z_{\sigma(k)}^{-\tau_{\sigma(k)}}-b^{-1}az_{\sigma(k)}^{\tau_{\sigma(k)}})
\prod_{k=1}^N \prod_{j=1}^M (az_{\sigma(k)}^{-\tau_{\sigma(k)}}-a^{-1}z_{\sigma(k)}^{\tau_{\sigma(k)}}w_j)
\nonumber \\
&\times
\prod_{k=1}^N \prod_{j=1}^{x_k-1} (az_{\sigma(k)}^{-\tau_{\sigma(k)}}w_j-a^{-1}z_{\sigma(k)}^{\tau_{\sigma(k)}})
\prod_{k=1}^N \prod_{j=x_k+1}^M (az_{\sigma(k)}^{\tau_{\sigma(k)}}-a^{-1}z_{\sigma(k)}^{-\tau_{\sigma(k)}}w_j).
\label{ordinaryrighthandside}
\end{align}

We also define the following symmetric function
$\overline{F}_{M,N}(z_1,\dots,z_N|w_1,\dots,w_M|\overline{x_1},\dots,\overline{x_N})$
\begin{align}
&\overline{F}_{M,N}(z_1,\dots,z_N|w_1,\dots,w_M|\overline{x_1},\dots,\overline{x_N})
\nonumber \\
=&
(a^2-a^{-2})^N \prod_{j=1}^N (-a^2 z_j^2+a^{-2} z_j^{-2})
\sum_{\sigma \in S_N}
\sum_{\tau_1=\pm 1,\dots,\tau_N=\pm 1}
\prod_{j=1}^N \frac{1}{z_j^{2 \tau_j}-z_j^{-2 \tau_j}}
\nonumber \\
&\times \prod_{1 \le j < k \le N}
\frac{(a^2z_{\sigma(j)}^{\tau_{\sigma(j)}}
z_{\sigma(k)}^{\tau_{\sigma(k)}}
-a^{-2}z_{\sigma(j)}^{-\tau_{\sigma(j)}}
z_{\sigma(k)}^{-\tau_{\sigma(k)}}
)
(
a^2z_{\sigma(j)}^{-\tau_{\sigma(j)}}
z_{\sigma(k)}^{\tau_{\sigma(k)}}
-a^{-2}z_{\sigma(j)}^{\tau_{\sigma(j)}}
z_{\sigma(k)}^{-\tau_{\sigma(k)}}
)}
{(
z_{\sigma(j)}^{\tau_{\sigma(j)}}
z_{\sigma(k)}^{\tau_{\sigma(k)}}
-z_{\sigma(j)}^{-\tau_{\sigma(j)}}
z_{\sigma(k)}^{-\tau_{\sigma(k)}}
)
(
z_{\sigma(j)}^{-\tau_{\sigma(j)}}
z_{\sigma(k)}^{\tau_{\sigma(k)}}
-z_{\sigma(j)}^{\tau_{\sigma(j)}}
z_{\sigma(k)}^{-\tau_{\sigma(k)}}
)} \nonumber \\
&\times
\prod_{k=1}^N (baz_{\sigma(k)}^{\tau_{\sigma(k)}}-b^{-1}a^{-1}z_{\sigma(k)}^{-\tau_{\sigma(k)}})
\prod_{k=1}^N \prod_{j=1}^M (az_{\sigma(k)}^{-\tau_{\sigma(k)}}w_j-a^{-1}z_{\sigma(k)}^{\tau_{\sigma(k)}})
\nonumber \\
&\times
\prod_{k=1}^N \prod_{j=1}^{\overline{x_k}-1} (az_{\sigma(k)}^{-\tau_{\sigma(k)}}-a^{-1}z_{\sigma(k)}^{\tau_{\sigma(k)}}w_j)
\prod_{k=1}^N \prod_{j=\overline{x_k}+1}^M (az_{\sigma(k)}^{\tau_{\sigma(k)}}w_j-a^{-1}z_{\sigma(k)}^{-\tau_{\sigma(k)}}),
\label{dualordinaryrighthandside}
\end{align}
where $\overline{x_1},\dots,\overline{x_N}$ are integers satisfying
$1 \le \overline{x_1} < \cdots < \overline{x_N} \le M$.

\end{definition}

\begin{theorem} \label{maintheoremstatement}
The wavefunctions of the $U_q(sl_2)$ six-vertex model
under reflecting boundary
\\
$W_{M,N}(z_1,\dots,z_N|w_1,\dots,w_M|x_1,\dots,x_N)$
is expressed as the symmetric function \\
$F_{M,N}(z_1,\dots,z_N|w_1,\dots,w_M|x_1,\dots,x_N)$
\begin{align}
W_{M,N}(z_1,\dots,z_N|w_1,\dots,w_M|x_1,\dots,x_N)
=F_{M,N}(z_1,\dots,z_N|w_1,\dots,w_M|x_1,\dots,x_N).
\label{maintheorem}
\end{align}
The dual wavefunctions of the $U_q(sl_2)$ six-vertex model
under reflecting boundary
\\
$\overline{W}_{M,N}(z_1,\dots,z_N|w_1,\dots,w_M|\overline{x_1},\dots,\overline{x_N})$
is expressed as the symmetric function \\
$\overline{F}_{M,N}(z_1,\dots,z_N|w_1,\dots,w_M|\overline{x_1},\dots,\overline{x_N})$
\begin{align}
\overline{W}_{M,N}(z_1,\dots,z_N|w_1,\dots,w_M|\overline{x_1},\dots,\overline{x_N})
=\overline{F}_{M,N}(z_1,\dots,z_N|w_1,\dots,w_M|\overline{x_1},\dots,\overline{x_N}).
\label{dualmaintheorem}
\end{align}
\end{theorem}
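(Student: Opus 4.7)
The strategy is the standard Izergin-Korepin technique: Proposition 4.1 uniquely characterizes $W_{M,N}$, so it suffices to verify that $F_{M,N}$ satisfies the same five properties. The uniqueness goes as follows. For $x_N=M$, property (1) says the wavefunction is a polynomial of degree $2N-1$ in $w_M$; properties (2) and (4) together determine its value at the $N$ points $w_M=a^2z_j^2$, $j=1,\dots,N$, and the reflection relation (3) then yields $N$ additional values at $w_M=a^2z_j^{-2}$, which is enough interpolation data for a polynomial of degree $2N-1$. For $x_N\neq M$, property (4) reduces directly to the smaller system $W_{M-1,N}$. Combined with the base case (5) at $N=1$, $x_1=M$, this yields a double induction on $(M,N)$ that pins down $W_{M,N}$.

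Properties (2), (3), and (5) for $F_{M,N}$ are essentially manifest. Symmetry in $z_1,\dots,z_N$ follows from the explicit sum $\sum_{\sigma\in S_N}$; the reflection relation follows since $z_i\mapsto z_i^{-1}$ is absorbed by the involution $\tau_j\mapsto -\tau_j$ of the $\tau$-sum, with the outer prefactor $\prod_j(-a^2z_j^2+a^{-2}z_j^{-2})$ providing exactly the ratio required in \eqref{permutationwavefunction}; and the specialization to $N=1$, $x_1=M$ reduces to Proposition 3.2. For (1), each summand is manifestly polynomial in $w_M$, and a direct degree count gives $N$ from $\prod_{k=1}^N\prod_{j=1}^M(az_{\sigma(k)}^{-\tau_{\sigma(k)}}-a^{-1}z_{\sigma(k)}^{\tau_{\sigma(k)}}w_j)$ and $N-1$ from $\prod_{k=1}^N\prod_{j=x_k+1}^M(\cdots)$ (only the $k<N$ slots contain the $j=M$ factor when $x_N=M$), for a total of $2N-1$. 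One must additionally check that apparent poles in the $z$-cross-product denominators cancel in the symmetrized sum, which is the usual Bethe-type residue argument.

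The main obstacle is property (4), especially the recursion \eqref{ordinaryrecursionwavefunction} at $w_M=a^2z_N^2$. I would examine which configurations $(\sigma,\tau)$ survive the substitution. The $j=M$ factor of $\prod_{j=1}^M(az_{\sigma(k)}^{-\tau_{\sigma(k)}}-a^{-1}z_{\sigma(k)}^{\tau_{\sigma(k)}}w_j)$ vanishes exactly when $(\sigma(k),\tau_{\sigma(k)})=(N,-1)$, while the $j=M$ factor of $\prod_{j=x_k+1}^M(az_{\sigma(k)}^{\tau_{\sigma(k)}}-a^{-1}z_{\sigma(k)}^{-\tau_{\sigma(k)}}w_j)$ is present only for $k<N$ (since $x_N=M$) and vanishes exactly when $(\sigma(k),\tau_{\sigma(k)})=(N,+1)$. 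These two constraints force $\sigma(N)=N$ and $\tau_N=+1$ on every surviving summand. Extracting the $k=N$ contributions together with the cross-products $\prod_{j<N}$ having $k=N$, $\tau_N=+1$, and simplifying at $w_M=a^2z_N^2$, should reproduce the prefactor of \eqref{ordinaryrecursionwavefunction}, while the residual sum over $\sigma\in S_{N-1}$ and $\tau_1,\dots,\tau_{N-1}\in\{\pm 1\}$ is precisely $F_{M-1,N-1}(z_1,\dots,z_{N-1}|w_1,\dots,w_{M-1}|x_1,\dots,x_{N-1})$. The factorization \eqref{ordinaryrecursionwavefunction2} for $x_N\neq M$ is easier: when every $x_k<M$, the $j=M$ factors from the first and last products combine slot-by-slot to $(az_{\sigma(k)}^{-\tau_{\sigma(k)}}-a^{-1}z_{\sigma(k)}^{\tau_{\sigma(k)}}w_M)(az_{\sigma(k)}^{\tau_{\sigma(k)}}-a^{-1}z_{\sigma(k)}^{-\tau_{\sigma(k)}}w_M)$, a quantity invariant under $\tau_k\mapsto -\tau_k$ and under permutations of slots, hence factoring uniformly out of the double sum as $\prod_{j=1}^N(az_j^{-1}-a^{-1}z_jw_M)(az_j-a^{-1}z_j^{-1}w_M)$ and leaving $F_{M-1,N}$.

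The dual statement \eqref{dualmaintheorem} is proved by the identical argument applied to $\overline{F}_{M,N}$, using the dual proposition at the end of Section 4 together with the dual simplest-case proposition as the initial condition.
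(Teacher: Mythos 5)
Your proposal is correct and follows essentially the same route as the paper: verify that $F_{M,N}$ satisfies all five properties of Proposition \ref{ordinarypropertiesfordomainwallboundarypartitionfunction}, with the key step being that at $w_M=a^2z_N^2$ only the summands with $\sigma(N)=N$, $\tau_N=+1$ survive, yielding the recursion to $F_{M-1,N-1}$, and the $x_N\neq M$ case factoring uniformly. Your explicit spelling-out of the interpolation count ($2N$ determined values for a degree-$2N-1$ polynomial in $w_M$) makes the uniqueness part more transparent than the paper, which leaves it implicit.
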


\begin{proof}
Let us show \eqref{maintheorem} since
\eqref{dualmaintheorem} can be proved in the same way.
We show that \eqref{ordinaryrighthandside} satisfies all the Properties
in Proposition \ref{ordinarypropertiesfordomainwallboundarypartitionfunction}.
It can be easily checked from the definition
of the symmetric function \eqref{ordinaryrighthandside}
that it satisfies Properties (2) and (5).

To show Property (3), one first notes that the sum of two summands in
\eqref{ordinaryrighthandside} which are labeled by the common
$\sigma \in S_N$ and $\tau_1,\dots,\tau_N$ except $\tau_i \ (i=1,\dots,N)$
are always invariant under $z_i \longleftrightarrow z_i^{-1}$.
Then \eqref{permutationwavefunction} follows by looking at the
overall factor
$\displaystyle (a^2-a^{-2})^N \prod_{j=1}^N (-a^2 z_j^2+a^{-2} z_j^{-2})$
in \eqref{ordinaryrighthandside}.

Property (1) can be seen by noting that the factor
\begin{align}
\prod_{k=1}^N \prod_{j=1}^M (az_{\sigma(k)}^{-\tau_{\sigma(k)}}-a^{-1}z_{\sigma(k)}^{\tau_{\sigma(k)}}w_j)
\prod_{k=1}^N \prod_{j=x_k+1}^M (az_{\sigma(k)}^{\tau_{\sigma(k)}}-a^{-1}z_{\sigma(k)}^{-\tau_{\sigma(k)}}w_j),
\end{align}
in each summand in \eqref{ordinaryrighthandside}
where $w_M$ lives in becomes
\begin{align}
\prod_{k=1}^N \prod_{j=1}^M (az_{\sigma(k)}^{-\tau_{\sigma(k)}}-a^{-1}z_{\sigma(k)}^{\tau_{\sigma(k)}}w_j)
\prod_{k=1}^{N-1} \prod_{j=x_k+1}^M (az_{\sigma(k)}^{\tau_{\sigma(k)}}-a^{-1}z_{\sigma(k)}^{-\tau_{\sigma(k)}}w_j),
\end{align}
when $x_N=M$.

Let us show \eqref{ordinaryrighthandside} satisfies Property (4).
See \cite{Motr} for case of the simpler wavefunctions
without reflecting boundary. The way to prove is basically the same.
First, we consider the case $x_N=M$.
After the substitution $w_M=a^2 z_N^2$, one can see that
only the summands satisfying $\sigma(N)=N$, $\tau_N=+1$ in 
\eqref{ordinaryrighthandside} survive.
Due to this fact, one can rewrite the evaluation of
$F_{M,N}(z_1,\dots,z_N|w_1,\dots,w_M|x_1,\dots,x_N)$ at $w_M=a^2 z_N^2$ as
\begin{align}
&F_{M,N}(z_1,\dots,z_N|w_1,\dots,w_M|x_1,\dots,x_N)|_{w_M=a^2 z_N^2}
=
\frac{(a^2-a^{-2}) (-a^2 z_N^2+a^{-2} z_N^{-2})}{z_N^2-z_N^{-2}} \nonumber \\
&\times (a^2-a^{-2})^{N-1} \prod_{j=1}^{N-1} (-a^2 z_j^2+a^{-2} z_j^{-2})
\sum_{\sigma \in S_{N-1}}
\sum_{\tau_1=\pm 1,\dots,\tau_{N-1}=\pm 1}
\prod_{j=1}^{N-1} \frac{1}{z_j^{2 \tau_j}-z_j^{-2 \tau_j}}
\nonumber \\
&\times
\prod_{j=1}^{N-1}
\frac{(a^2z_{\sigma(j)}^{\tau_{\sigma(j)}}
z_N
-a^{-2}z_{\sigma(j)}^{-\tau_{\sigma(j)}}
z_N^{-1}
)
(
a^2z_{\sigma(j)}^{-\tau_{\sigma(j)}}
z_N
-a^{-2}z_{\sigma(j)}^{\tau_{\sigma(j)}}
z_N^{-1}
)}
{(
z_{\sigma(j)}^{\tau_{\sigma(j)}}
z_{N}
-z_{\sigma(j)}^{-\tau_{\sigma(j)}}
z_{N}^{-1}
)
(
z_{\sigma(j)}^{-\tau_{\sigma(j)}}
z_{N}
-z_{\sigma(j)}^{\tau_{\sigma(j)}}
z_{N}^{-1}
)}
\nonumber \\
&\times \prod_{1 \le j < k \le N-1}
\frac{(a^2z_{\sigma(j)}^{\tau_{\sigma(j)}}
z_{\sigma(k)}^{\tau_{\sigma(k)}}
-a^{-2}z_{\sigma(j)}^{-\tau_{\sigma(j)}}
z_{\sigma(k)}^{-\tau_{\sigma(k)}}
)
(
a^2z_{\sigma(j)}^{-\tau_{\sigma(j)}}
z_{\sigma(k)}^{\tau_{\sigma(k)}}
-a^{-2}z_{\sigma(j)}^{\tau_{\sigma(j)}}
z_{\sigma(k)}^{-\tau_{\sigma(k)}}
)}
{(
z_{\sigma(j)}^{\tau_{\sigma(j)}}
z_{\sigma(k)}^{\tau_{\sigma(k)}}
-z_{\sigma(j)}^{-\tau_{\sigma(j)}}
z_{\sigma(k)}^{-\tau_{\sigma(k)}}
)
(
z_{\sigma(j)}^{-\tau_{\sigma(j)}}
z_{\sigma(k)}^{\tau_{\sigma(k)}}
-z_{\sigma(j)}^{\tau_{\sigma(j)}}
z_{\sigma(k)}^{-\tau_{\sigma(k)}}
)}
\nonumber
\end{align}
\begin{align}
&\times
(ba^{-1}z_N^{-1}-b^{-1}az_N)
\prod_{k=1}^{N-1} (ba^{-1}z_{\sigma(k)}^{-\tau_{\sigma(k)}}-b^{-1}az_{\sigma(k)}^{\tau_{\sigma(k)}}) \nonumber \\
&\times (az_N^{-1}-az_N^3) \prod_{j=1}^{M-1} (az_N^{-1}-a^{-1}z_Nw_j)
\nonumber \\
&\times
\prod_{k=1}^{N-1} (az_{\sigma(k)}^{-\tau_{\sigma(k)}}-
a z_N^2 z_{\sigma(k)}^{\tau_{\sigma(k)}})
\prod_{k=1}^{N-1} \prod_{j=1}^{M-1} (az_{\sigma(k)}^{-\tau_{\sigma(k)}}-a^{-1}z_{\sigma(k)}^{\tau_{\sigma(k)}}w_j)
\nonumber \\
&\times
\prod_{j=1}^{M-1} (az_N^{-1}w_j-a^{-1}z_N)
\prod_{k=1}^{N-1} \prod_{j=1}^{x_k-1} (az_{\sigma(k)}^{-\tau_{\sigma(k)}}w_j-a^{-1}z_{\sigma(k)}^{\tau_{\sigma(k)}})
\nonumber \\
&\times
\prod_{k=1}^{N-1} (az_{\sigma(k)}^{\tau_{\sigma(k)}}-az_N^2
z_{\sigma(k)}^{-\tau_{\sigma(k)}})
\prod_{k=1}^{N-1} \prod_{j=x_k+1}^{M-1} (az_{\sigma(k)}^{\tau_{\sigma(k)}}-a^{-1}z_{\sigma(k)}^{-\tau_{\sigma(k)}}w_j). \label{manyfactors}
\end{align}
One can show by tedious but straightforward computation that
a part of the product of factors in \eqref{manyfactors} given below
\begin{align}
&\frac{(a^2-a^{-2}) (-a^2 z_N^2+a^{-2} z_N^{-2})}{z_N^2-z_N^{-2}}
\prod_{j=1}^{N-1}
\frac{(a^2z_{\sigma(j)}^{\tau_{\sigma(j)}}
z_N
-a^{-2}z_{\sigma(j)}^{-\tau_{\sigma(j)}}
z_N^{-1}
)
(
a^2z_{\sigma(j)}^{-\tau_{\sigma(j)}}
z_N
-a^{-2}z_{\sigma(j)}^{\tau_{\sigma(j)}}
z_N^{-1}
)}
{(
z_{\sigma(j)}^{\tau_{\sigma(j)}}
z_{N}
-z_{\sigma(j)}^{-\tau_{\sigma(j)}}
z_{N}^{-1}
)
(
z_{\sigma(j)}^{-\tau_{\sigma(j)}}
z_{N}
-z_{\sigma(j)}^{\tau_{\sigma(j)}}
z_{N}^{-1}
)}
\nonumber \\
&\times
(ba^{-1}z_N^{-1}-b^{-1}az_N)(az_N^{-1}-az_N^3) \prod_{j=1}^{M-1} (az_N^{-1}-a^{-1}z_Nw_j) \prod_{k=1}^{N-1} (az_{\sigma(k)}^{-\tau_{\sigma(k)}}-
a z_N^2 z_{\sigma(k)}^{\tau_{\sigma(k)}})
\nonumber \\
&\times \prod_{j=1}^{M-1} (az_N^{-1}w_j-a^{-1}z_N)
\prod_{k=1}^{N-1} (az_{\sigma(k)}^{\tau_{\sigma(k)}}-az_N^2
z_{\sigma(k)}^{-\tau_{\sigma(k)}}),
\end{align}
can be simplified as
\begin{align}
&\frac{(a^2-a^{-2}) (-a^2 z_N^2+a^{-2} z_N^{-2})}{z_N^2-z_N^{-2}}
\prod_{j=1}^{N-1}
\frac{(a^2 z_{j}z_{N}-a^{-2}z_{j}^{-1}z_{N}^{-1})(a^2 z_{j}^{-1}z_{N}-a^{-2}z_{j}z_{N}^{-1})}
{(z_{j}z_{N}-z_{j}^{-1}z_{N}^{-1})
(z_{j}^{-1}z_{N}-z_{j}z_{N}^{-1})}
\nonumber \\
&\times
(ba^{-1}z_N^{-1}-b^{-1}az_N)(az_N^{-1}-az_N^3) \prod_{j=1}^{M-1} (az_N^{-1}-a^{-1}z_Nw_j) \prod_{k=1}^{N-1} (az_{k}^{-1}-
a z_N^2 z_{k})
\nonumber \\
&\times \prod_{j=1}^{M-1} (az_N^{-1}w_j-a^{-1}z_N)
\prod_{k=1}^{N-1} (az_{k}-az_N^2
z_{k}^{-1})
\nonumber \\
=&(a^2-a^{-2})(ba^{-1}z_N^{-1}-b^{-1}az_N)
\prod_{j=1}^{N}(a^3 z_N^2 z_j-a^{-1}z_j^{-1})
\prod_{j=1}^{N-1} (a^3 z_N^2 z_j^{-1}-a^{-1}z_j) \nonumber \\
&\times \prod_{j=1}^{M-1} (az_N^{-1}w_j-a^{-1}z_N)(az_N^{-1}-a^{-1}z_N w_j).
\label{simplifiedfactors}
\end{align}
From this simplification, one finds that the right hand side of
\eqref{manyfactors} can be expressed as the product of the right hand side of
\eqref{simplifiedfactors} and
$F_{M-1,N-1}(z_1,\dots,z_{N-1}|w_1,\dots,w_{M-1}|x_1,\dots,x_{N-1})$,
and we get
\begin{align}
&F_{M,N}(z_1,\dots,z_N|w_1,\dots,w_M|x_1,\dots,x_N)
|_{w_M=a^2 z_N^2}
\nonumber \\
=&(a^2-a^{-2})(ba^{-1}z_N^{-1}-b^{-1}az_N)
\prod_{j=1}^{N}(a^3 z_N^2 z_j-a^{-1}z_j^{-1})
\prod_{j=1}^{N-1} (a^3 z_N^2 z_j^{-1}-a^{-1}z_j) \nonumber \\
&\times \prod_{j=1}^{M-1} (az_N^{-1}w_j-a^{-1}z_N)(az_N^{-1}-a^{-1}z_N w_j)
\nonumber \\
&\times F_{M-1,N-1}(z_1,\dots,z_{N-1}|w_1,\dots,w_{M-1}|x_1,\dots,x_{N-1})
,
\end{align}
hence it is shown that $F_{M,N}(z_1,\dots,z_N|w_1,\dots,w_M|x_1,\dots,x_N)$
satisfies Property (4) for the case $x_N=M$.

What remains to check is
Property (4) for the case $x_N \neq M$, which can be proved as follows.
First, we rewrite the symmetric functions
$F_{M,N}(z_1,\dots,z_N|w_1,\dots,w_M|x_1,\dots,x_N)$
for the case $x_N \neq M$ as

\begin{align}
&F_{M,N}(z_1,\dots,z_N|w_1,\dots,w_M|x_1,\dots,x_N)
\nonumber \\
=&
(a^2-a^{-2})^N \prod_{j=1}^N (-a^2 z_j^2+a^{-2} z_j^{-2})
\sum_{\sigma \in S_N}
\sum_{\tau_1=\pm 1,\dots,\tau_N=\pm 1}
\prod_{j=1}^N \frac{1}{z_j^{2 \tau_j}-z_j^{-2 \tau_j}}
\nonumber \\
&\times \prod_{1 \le j < k \le N}
\frac{(a^2z_{\sigma(j)}^{\tau_{\sigma(j)}}
z_{\sigma(k)}^{\tau_{\sigma(k)}}
-a^{-2}z_{\sigma(j)}^{-\tau_{\sigma(j)}}
z_{\sigma(k)}^{-\tau_{\sigma(k)}}
)
(
a^2z_{\sigma(j)}^{-\tau_{\sigma(j)}}
z_{\sigma(k)}^{\tau_{\sigma(k)}}
-a^{-2}z_{\sigma(j)}^{\tau_{\sigma(j)}}
z_{\sigma(k)}^{-\tau_{\sigma(k)}}
)}
{(
z_{\sigma(j)}^{\tau_{\sigma(j)}}
z_{\sigma(k)}^{\tau_{\sigma(k)}}
-z_{\sigma(j)}^{-\tau_{\sigma(j)}}
z_{\sigma(k)}^{-\tau_{\sigma(k)}}
)
(
z_{\sigma(j)}^{-\tau_{\sigma(j)}}
z_{\sigma(k)}^{\tau_{\sigma(k)}}
-z_{\sigma(j)}^{\tau_{\sigma(j)}}
z_{\sigma(k)}^{-\tau_{\sigma(k)}}
)}
\nonumber \\
&\times
\prod_{k=1}^N (ba^{-1}z_{\sigma(k)}^{-\tau_{\sigma(k)}}-b^{-1}az_{\sigma(k)}^{\tau_{\sigma(k)}})
\prod_{k=1}^N (az_{\sigma(k)}^{-\tau_{\sigma(k)}}-a^{-1}z_{\sigma(k)}^{\tau_{\sigma(k)}}w_M) \nonumber \\
&\times
\prod_{k=1}^N \prod_{j=1}^{M-1} (az_{\sigma(k)}^{-\tau_{\sigma(k)}}-a^{-1}z_{\sigma(k)}^{\tau_{\sigma(k)}}w_j)
\prod_{k=1}^N \prod_{j=1}^{x_k-1} (az_{\sigma(k)}^{-\tau_{\sigma(k)}}w_j-a^{-1}z_{\sigma(k)}^{\tau_{\sigma(k)}}) \nonumber \\
&\times \prod_{k=1}^N (az_{\sigma(k)}^{\tau_{\sigma(k)}}-a^{-1}z_{\sigma(k)}^{-\tau_{\sigma(k)}}w_M)
\prod_{k=1}^N \prod_{j=x_k+1}^{M-1} (az_{\sigma(k)}^{\tau_{\sigma(k)}}-a^{-1}z_{\sigma(k)}^{-\tau_{\sigma(k)}}w_j). \label{towardsfactorization}
\end{align}
Noting that the product of factors
\begin{align}
\prod_{k=1}^N (az_{\sigma(k)}^{-\tau_{\sigma(k)}}-a^{-1}z_{\sigma(k)}^{\tau_{\sigma(k)}}w_M)
\prod_{k=1}^N (az_{\sigma(k)}^{\tau_{\sigma(k)}}-a^{-1}z_{\sigma(k)}^{-\tau_{\sigma(k)}}w_M),
\end{align}
in the right hand side of
\eqref{towardsfactorization}
can be rewritten as
\begin{align}
\prod_{j=1}^N (az_j^{-1}-a^{-1}z_j w_M)(az_j-a^{-1}z_j^{-1}w_M),
\label{usethisfactor}
\end{align}
which do not have any dependence on $\sigma$,
one finds that \eqref{towardsfactorization}
can be rewritten as a product of
\eqref{usethisfactor} and
$F_{M-1,N}(z_1,\dots,z_N|w_1,\dots,w_{M-1}|x_1,\dots,x_N)$
\begin{align}
&F_{M,N}(z_1,\dots,z_N|w_1,\dots,w_M|x_1,\dots,x_N)
\nonumber \\
=&\prod_{j=1}^N (az_j^{-1}-a^{-1}z_j w_M)(az_j-a^{-1}z_j^{-1}w_M)
F_{M-1,N}(z_1,\dots,z_N|w_1,\dots,w_{M-1}|x_1,\dots,x_N).
\end{align}
Thus we have proven that $F_{M,N}(z_1,\dots,z_N|w_1,\dots,w_M|x_1,\dots,x_N)$
satisfies Property (4) for the case $x_N \neq M$.
\end{proof}

Now let us compare the symmetric functions
$F_{M,N}(z_1,\dots,z_N|w_1,\dots,w_M|x_1,\dots,x_N)$ \eqref{ordinaryrighthandside}
introduced in this section with
the coordinate Bethe ansatz wavefunctions for the open XXZ chain by
Alcaraz-Barber-Batchelor-Baxter-Quispel \cite{ABBBQ}.
For comparison, we consider the homogeneous limit $w_1=\cdots=w_M=1$
of $F_{M,N}(z_1,\dots,z_N|w_1,\dots,w_M|x_1,\dots,x_N)$

\begin{align}
&F_{M,N}(z_1,\dots,z_N|1,\dots,1|x_1,\dots,x_N)
\nonumber \\
=&
(a^2-a^{-2})^N \prod_{j=1}^N (-a^2 z_j^2+a^{-2} z_j^{-2})
\sum_{\sigma \in S_N}
\sum_{\tau_1=\pm 1,\dots,\tau_N=\pm 1}
\prod_{j=1}^N \frac{1}{z_j^{2 \tau_j}-z_j^{-2 \tau_j}}
\nonumber \\
&\times \prod_{1 \le j < k \le N}
\frac{(a^2z_{\sigma(j)}^{\tau_{\sigma(j)}}
z_{\sigma(k)}^{\tau_{\sigma(k)}}
-a^{-2}z_{\sigma(j)}^{-\tau_{\sigma(j)}}
z_{\sigma(k)}^{-\tau_{\sigma(k)}}
)
(
a^2z_{\sigma(j)}^{-\tau_{\sigma(j)}}
z_{\sigma(k)}^{\tau_{\sigma(k)}}
-a^{-2}z_{\sigma(j)}^{\tau_{\sigma(j)}}
z_{\sigma(k)}^{-\tau_{\sigma(k)}}
)}
{(
z_{\sigma(j)}^{\tau_{\sigma(j)}}
z_{\sigma(k)}^{\tau_{\sigma(k)}}
-z_{\sigma(j)}^{-\tau_{\sigma(j)}}
z_{\sigma(k)}^{-\tau_{\sigma(k)}}
)
(
z_{\sigma(j)}^{-\tau_{\sigma(j)}}
z_{\sigma(k)}^{\tau_{\sigma(k)}}
-z_{\sigma(j)}^{\tau_{\sigma(j)}}
z_{\sigma(k)}^{-\tau_{\sigma(k)}}
)}
\nonumber \\
&\times
\prod_{k=1}^N (ba^{-1}z_{\sigma(k)}^{-\tau_{\sigma(k)}}-b^{-1}az_{\sigma(k)}^{\tau_{\sigma(k)}})
\prod_{k=1}^N (az_{\sigma(k)}^{-\tau_{\sigma(k)}}-a^{-1}z_{\sigma(k)}^{\tau_{\sigma(k)}})^M
\nonumber \\
&\times
\prod_{k=1}^N (az_{\sigma(k)}^{-\tau_{\sigma(k)}}-a^{-1}z_{\sigma(k)}^{\tau_{\sigma(k)}})^{x_k-1}
\prod_{k=1}^N (az_{\sigma(k)}^{\tau_{\sigma(k)}}-a^{-1}z_{\sigma(k)}^{-\tau_{\sigma(k)}})^{M-x_k}. \label{forcomparisonwithcoordinatebetheansatz}
\end{align}
One first rewrites \eqref{forcomparisonwithcoordinatebetheansatz} as
\begin{align}
&F_{M,N}(z_1,\dots,z_N|1,\dots,1|x_1,\dots,x_N)
\nonumber \\
=&
\frac{(a^2-a^{-2})^N}
{\prod_{1 \le j < k \le N}
(z_k^2+z_k^{-2}-z_j^2-z_j^{-2})
}
\prod_{j=1}^N 
\frac{(-a^2 z_j^2+a^{-2} z_j^{-2})(a^2+a^{-2}-z_j^2-z_j^{-2})^M}
{z_j^2-z_j^{-2}} \nonumber \\
&\times \sum_{\sigma \in S_N}
\sum_{\tau_1=\pm 1,\dots,\tau_N=\pm 1} \mathrm{sgn} (\sigma) (-1)^{|\tau|}
\nonumber \\
&\times \prod_{1 \le j < k \le N}
(a^2z_{\sigma(j)}^{\tau_{\sigma(j)}}
z_{\sigma(k)}^{\tau_{\sigma(k)}}
-a^{-2}z_{\sigma(j)}^{-\tau_{\sigma(j)}}
z_{\sigma(k)}^{-\tau_{\sigma(k)}}
)
(
a^2z_{\sigma(j)}^{-\tau_{\sigma(j)}}
z_{\sigma(k)}^{\tau_{\sigma(k)}}
-a^{-2}z_{\sigma(j)}^{\tau_{\sigma(j)}}
z_{\sigma(k)}^{-\tau_{\sigma(k)}}
)
\nonumber \\
&\times
\prod_{k=1}^N \frac{ba^{-1}z_{\sigma(k)}^{-\tau_{\sigma(k)}}-b^{-1}az_{\sigma(k)}^{\tau_{\sigma(k)}}}
{
az_{\sigma(k)}^{-\tau_{\sigma(k)}}-a^{-1}z_{\sigma(k)}^{\tau_{\sigma(k)}}}
\prod_{k=1}^N 
\Bigg(
\frac{
az_{\sigma(k)}^{-\tau_{\sigma(k)}}-a^{-1}z_{\sigma(k)}^{\tau_{\sigma(k)}}}
{az_{\sigma(k)}^{\tau_{\sigma(k)}}-a^{-1}z_{\sigma(k)}^{-\tau_{\sigma(k)}}}
\Bigg)^{x_k},
\label{forcomparisonwithcoordinatebetheansatztwo}
\end{align}
where $|\tau|$ denotes the number of $\tau_j$'s satisfying $\tau_j=-1$.
We introduce the following variables
$\displaystyle \mathrm{e}^{iK_\ell}=\frac{az_\ell-a^{-1}z_\ell^{-1}}{az_\ell^{-1}-a^{-1}z_\ell}, \ell=1,\dots,N,
\Delta=-\frac{a^2+a^{-2}}{2}$, $\displaystyle p^\prime=-\frac{b+b^{-1}}{2(b-b^{-1})}(a^2-a^{-2})$, which are natural
parametrizations for the description of the open XXZ chain
used in \cite{ABBBQ}.
Each represents the momentums, anisotropy parameter and boundary parameter
respectively.
One can show the following relations
\begin{align}
\frac{ba^{-1}z_j^{-1}-b^{-1}az_j}{az_j^{-1}-a^{-1}z_j}
&=\frac{b-b^{-1}}{a^2-a^{-2}}(1+(p^\prime-\Delta) \mathrm{e}^{iK_j}), \label{forcomparisonone} \\
a^2 z_j z_k-a^{-2} z_j^{-1} z_k^{-1}&
=(az_k-a^{-1}z_k^{-1})(az_j^{-1}-a^{-1}z_j)
\frac{1-2\Delta \mathrm{e}^{iK_j}+\mathrm{e}^{iK_j-iK_k}}{a^2-a^{-2}} \label{forcomparisontwo} \\
a^2 z_j^{-1} z_k-a^{-2} z_j z_k^{-1}&=(az_k^{-1}-a^{-1}z_k)(az_j-a^{-1}z_j^{-1})
\frac{1-2\Delta \mathrm{e}^{iK_k}+\mathrm{e}^{iK_j+iK_k}}{a^2-a^{-2}} \mathrm{e}^{-iK_j} \label{forcomparisonthree}.
\end{align}
Using
\eqref{forcomparisonone}, \eqref{forcomparisontwo} and \eqref{forcomparisonthree}, one finds that
\eqref{forcomparisonwithcoordinatebetheansatztwo}
can be rewritten as
\begin{align}
&F_{M,N}(z_1,\dots,z_N|1,\dots,1|x_1,\dots,x_N)
\nonumber \\
=&(b-b^{-1})^N
\prod_{1 \le j < k \le N}
\frac{
(a^2+a^{-2}-z_j^2-z_j^{-2})(a^2+a^{-2}-z_k^{-2}-z_k^{-2})
}
{
(a^2-a^{-2})^2(z_k^2+z_k^{-2}-z_j^2-z_j^{-2})
} \nonumber \\
&\times\prod_{j=1}^N 
\frac{(-a^2 z_j^2+a^{-2} z_j^{-2})(a^2+a^{-2}-z_j^2-z_j^{-2})^M}
{z_j^2-z_j^{-2}} f_{M,N}(K_1,\dots,K_N|x_1,\dots,x_N),
\label{coordinatewavefunctionsappearance}
\end{align}
where
\begin{align}
&f_{M,N}(K_1,\dots,K_N|x_1,\dots,x_N)=
\sum_{\sigma \in S_N}
\sum_{\tau_1=\pm 1,\dots,\tau_N=\pm 1} \mathrm{sgn} (\sigma) (-1)^{|\tau|}
\nonumber \\
&\times \prod_{1 \le j < k \le N}
(1-2 \Delta \mathrm{e}^{i \tau_{\sigma(j)} K_{\sigma(j)}}
+\mathrm{e}^{i \tau_{\sigma(j)} K_{\sigma(j)}-i \tau_{\sigma(k)} K_{\sigma(k)}}
) \nonumber \\
&\times(
1-2 \Delta \mathrm{e}^{i \tau_{\sigma(k)} K_{\sigma(k)}}
+\mathrm{e}^{i \tau_{\sigma(j)} K_{\sigma(j)}+i \tau_{\sigma(k)} K_{\sigma(k)}}
)
\mathrm{e}^{-i \tau_{\sigma(j)} K_{\sigma(j)}}
\nonumber \\
&\times
\prod_{k=1}^N \mathrm{e}^{-i \tau_{\sigma(k)} K_{\sigma(k)} x_k} (1+(p^\prime-\Delta)
\mathrm{e}^{i \tau_{\sigma(k)} K_{\sigma(k)}}).
\end{align}
$f_{M,N}(K_1,\dots,K_N|x_1,\dots,x_N)$ can be expressed
in a compact way
\begin{align}
&f_{M,N}(K_1,\dots,K_N|x_1,\dots,x_N) \nonumber \\
=&
\sum_{P} \epsilon_P
\prod_{1 \le j < k \le N}
(1-2 \Delta \mathrm{e}^{i K_{j}}
+\mathrm{e}^{i K_{j}-i K_{k}}
)
(
1-2 \Delta \mathrm{e}^{i K_{k}}
+\mathrm{e}^{i K_{j}+i K_{k}}
)
\mathrm{e}^{-i K_{j}}
\nonumber \\
&\times
\prod_{k=1}^N \mathrm{e}^{-i K_{k} x_k} (1+(p^\prime-\Delta)
\mathrm{e}^{i K_{k}}), \label{before}
\end{align}
where the sum means that we take sum over all
permutations and negations of $K_1, \dots, K_N$,
and $\epsilon_P$ changes sign at each such ``mutation".
Relabelling the momentums and positions of down spins as
$K^\prime_j=K_{N+1-j}$,
$x^\prime_j=M+1-x_{N+1-j}$ ($j=1,\dots,N$), \eqref{before} can be rewritten as
\begin{align}
&f_{M,N}(K_1,\dots,K_N|x_1,\dots,x_N) \nonumber \\
=&
\sum_{P} \epsilon_P
\prod_{k=1}^N (1+(p^\prime-\Delta)
\mathrm{e}^{i K^\prime_{k}}) \mathrm{e}^{-i(M+1) K^\prime_{k}}
\nonumber \\
&\times \prod_{1 \le j < k \le N}
(1-2 \Delta \mathrm{e}^{i K^\prime_{k}}
+\mathrm{e}^{i K^\prime_{k}-i K^\prime_{j}}
)
(
1-2 \Delta \mathrm{e}^{i K^\prime_{j}}
+\mathrm{e}^{i K^\prime_{j}+i K^\prime_{k}}
)
\mathrm{e}^{-i K^\prime_{k}}
\prod_{k=1}^N \mathrm{e}^{i K^\prime_{k} x^\prime_k} \nonumber \\
=&\sum_P \epsilon_P A(K^\prime_1,\dots,K^\prime_N) \mathrm{e}^{\sum_{k=1}^N i K^\prime_{k} x^\prime_k}, \label{equationABBBQ}
\end{align}
where
\begin{align}
A(K^\prime_1,\dots,K^\prime_N)=&\prod_{j=1}^N \beta(-K^\prime_j)
\prod_{1 \le j < k \le N} B(-K^\prime_j,K^\prime_k)\mathrm{e}^{-iK^\prime_k},
\\
\beta(K^\prime)=&(1+(p^\prime-\Delta)
\mathrm{e}^{-i K^\prime}) \mathrm{e}^{i(M+1) K^\prime}, \\
B(-K^\prime_j,K^\prime_k)=&s(-K^\prime_j,K^\prime_k)s(K^\prime_k,K^\prime_j),
\\
s(K^\prime_1,K^\prime_2)=&1-2 \Delta \mathrm{e}^{iK^\prime_2}+
\mathrm{e}^{i(K^\prime_1+K^\prime_2)}.
\end{align}
$f_{M,N}(K_1,\dots,K_N|x_1,\dots,x_N)$
\eqref{equationABBBQ}, whose relation with
the symmetric functions \\
$F_{M,N}(z_1,\dots,z_N|1,\dots,1|x_1,\dots,x_N)$ is given by
\eqref{coordinatewavefunctionsappearance},
is the form of the coordinate Bethe ansatz wavefunctions
for the open XXZ chain by Alcaraz-Barber-Batchelor-Baxter-Quispel
(equations (2.33), (2.34) in \cite{ABBBQ}).

\section{Algebraic identities}
In this section,
we derive algebraic identities for the symmetric functions
introduced in the last section as an application
of the correspondence between the wavefunctions and symmetric functions
proven in the last section.
We combine the result in the last section with that
on the domain wall boundary partition function by Tsuchiya \cite{Tsuchiya}
and Kuperberg \cite{Ku2}.
The domain wall boundary partition function under reflecting boundary
$Z_M(z_1,\dots,z_M|w_1,\dots,w_M)$ is a special case
$M=N$, $x_j=j$, $j=1,\dots,M$ of the wavefunction under reflecting boundary
\begin{align}
Z_M(z_1,\dots,z_M|w_1,\dots,w_M)
=W_{M,M}(z_1,\dots,z_M|w_1,\dots,w_M|1,\dots,M).
\end{align}

First, let us recall the determinant formula of the domain wall boundary
partition functions under reflecting boundary.

\begin{theorem} (Tsuchiya \cite{Tsuchiya}, Kuperberg \cite{Ku2}) \label{inhomogeneousexpression}
The domain wall boundary partition function under reflecting boundary
$Z_M(z_1,\dots,z_M|w_1,\dots,w_M)$ can be expressed as the
following determinant

\begin{align}
&Z_M(z_1,\dots,z_M|w_1,\dots,w_M)
=(a^2-a^{-2})^M \prod_{i=1}^M w_i^M \prod_{i=1}^M (bw_i^{-1}-b^{-1})
(a^2 z_i^2-a^{-2} z_i^{-2}) \nonumber \\
&\times\frac{
\prod_{i,j=1}^M
(a^2+a^{-2}-z_i^{-2}w_j-z_i^2 w_j^{-1})(a^2+a^{-2}-z_i^{-2}w_j^{-1}-z_i^2 w_j)
}{\prod_{1 \le i<j \le M}
(-z_i^{-1}z_j+z_i z_j^{-1})(z_i^{-1}z_j^{-1}-z_i z_j)
(-w_j^{-1}+w_i^{-1})(w_i w_j-1)
} \nonumber \\
&\times \mathrm{det}_M \Bigg( \frac{1}
{
(a^2+a^{-2}-z_i^{-2}w_j-z_i^2 w_j^{-1})(a^2+a^{-2}-z_i^{-2}w_j^{-1}-z_i^2 w_j)
} \Bigg). \label{determinantinhomogeneheous}
\end{align}

\end{theorem}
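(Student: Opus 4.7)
The plan is to apply the Izergin-Korepin method set up in Section 4, specialized to the domain wall case $N=M$, $x_j=j$, since $Z_M(z_1,\dots,z_M|w_1,\dots,w_M)=W_{M,M}(z_1,\dots,z_M|w_1,\dots,w_M|1,\dots,M)$. The strategy has three parts: characterize $Z_M$ by a list of properties in the variable $w_M$; verify that the right-hand side of \eqref{determinantinhomogeneheous} satisfies the same list; and conclude by uniqueness together with induction on $M$.

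The characterization proceeds as follows. From Proposition \ref{ordinarypropertiesfordomainwallboundarypartitionfunction} with $N=M$ and $x_N=M$, the function $Z_M$ satisfies: (a) polynomiality of degree $2M-1$ in $w_M$; (b) symmetry in $z_1,\dots,z_M$; (c) the inversion relation under $z_i\leftrightarrow z_i^{-1}$; (d) the recursion \eqref{ordinaryrecursionwavefunction} at $w_M=a^2 z_M^2$ reducing $Z_M$ to $Z_{M-1}$; and (e) the initial value from Proposition \ref{simplestproposition} at $M=1$. Specific to the domain wall case (and not valid for generic wavefunctions) is the additional symmetry of $Z_M$ in $w_1,\dots,w_M$, which follows from a standard reflection-algebra argument exchanging adjacent quantum spaces via the $R$-matrix. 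Combining (b), (c), and (d), one obtains $2M$ independent specializations of $w_M$ (namely $w_M=a^2 z_j^{\pm 2}$ for $j=1,\dots,M$), and since the function is polynomial in $w_M$ of degree $2M-1$ by (a), Lagrange interpolation determines it uniquely once $Z_{M-1}$ is known.

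Next I will verify that the right-hand side of \eqref{determinantinhomogeneheous} satisfies (a)--(e) along with the $w$-symmetry. The symmetries in $z_i$ and $w_i$ are immediate from the structure of the Cauchy-type prefactor combined with the determinant, since row and column permutations are compensated by the antisymmetry of the Vandermonde-like denominators. The polynomial degree in $w_M$ and the $z_i\leftrightarrow z_i^{-1}$ behaviour can be read off by elementary degree counting in the prefactor together with an inspection of how the determinant entries transform. The initial case $M=1$ is a direct computation yielding a single-entry determinant that matches Proposition \ref{simplestproposition}.

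The main obstacle is verifying the recursion at $w_M=a^2 z_M^2$ on the determinant side. At this specialization, the factor $a^2+a^{-2}-z_M^{-2}w_M-z_M^2 w_M^{-1}$ in the numerator prefactor vanishes, while the same factor in the denominator of the $(M,M)$ matrix entry makes that entry diverge, so in the Laplace expansion only the term containing the $(M,M)$ entry survives to leading order. Expanding along the last row (or column) then reduces the determinant to size $M-1$. The careful bookkeeping consists of tracking which prefactor terms collapse, which survive, and showing that what remains is exactly the frozen-weight factor of \eqref{ordinaryrecursionwavefunction} multiplied by the same determinantal expression evaluated on $(z_1,\dots,z_{M-1}|w_1,\dots,w_{M-1})$. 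Once this matching is done, induction on $M$ closes the proof.
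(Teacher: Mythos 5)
The paper does not prove Theorem \ref{inhomogeneousexpression} at all: it is imported verbatim from Tsuchiya \cite{Tsuchiya} and Kuperberg \cite{Ku2}, and the text around it only uses the formula as an input. So there is no ``paper proof'' to match against; what you have written is a reconstruction of the standard Izergin--Korepin argument, which is essentially how Tsuchiya originally established \eqref{determinantinhomogeneheous}. Your outline is sound and correctly reuses the machinery the paper does develop: with $N=M$ and $x_j=j$ one has $x_N=M$ at every step, so the recursion branch \eqref{ordinaryrecursionwavefunction} of Proposition \ref{ordinarypropertiesfordomainwallboundarypartitionfunction} applies and sends $Z_M$ to $Z_{M-1}$; the $z$-symmetry transfers the specialization $w_M=a^2z_M^2$ to $w_M=a^2z_j^2$ for all $j$, the inversion relation supplies $w_M=a^2z_j^{-2}$, and $2M$ points do pin down a degree-$(2M-1)$ polynomial in $w_M$. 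The mechanism you describe on the determinant side (the vanishing prefactor killing all Laplace terms except those through the divergent $(M,M)$ entry) is exactly the right one, and the $M=1$ base case does reduce to Proposition \ref{simplestproposition} with $x_1=1$.

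Two small remarks. First, the $w$-symmetry you invoke via the reflection algebra is true but plays no role in your uniqueness argument, which runs entirely in the single variable $w_M$; you can drop it. Second, the one verification you wave at too quickly is that the right-hand side of \eqref{determinantinhomogeneheous} is actually a \emph{polynomial} in $w_M$ of degree $2M-1$: the Cauchy-type denominator $\prod_{i<j}(-w_j^{-1}+w_i^{-1})(w_iw_j-1)$ introduces apparent poles at $w_M=w_i$ and $w_M=w_i^{-1}$, and one must observe that the determinant vanishes there (its entries depend on $w_j$ only through $w_j+w_j^{-1}$, so two columns coincide) before degree counting makes sense. This is routine but it is a required step, not a corollary of ``elementary degree counting in the prefactor.'' With that point filled in, the argument is complete.
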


\begin{theorem} \label{inhomogeneouspairing}
We have the following algebraic identities for the
symmetric functions \\
$F(z_{M-N+1},\dots,z_M|w_1,\dots,w_M|x_1,\dots,x_N)$
and $\overline{F}(z_1,\dots,z_{M-N}|w_1,\dots,w_M|\overline{x_1},\dots,\overline{x_{M-N}})$
\begin{align}
&\sum_{x \sqcup \overline{x}=\{1,2,\dots,M \}} \overline{F}(z_1,\dots,z_{M-N}|w_1,\dots,w_M|\overline{x_1},\dots,\overline{x_{M-N}}) \nonumber \\
&\times F(z_{M-N+1},\dots,z_M|w_1,\dots,w_M|x_1,\dots,x_N)
\nonumber \\
=&(a^2-a^{-2})^M \prod_{i=1}^M w_i^M \prod_{i=1}^M (bw_i^{-1}-b^{-1})
(a^2 z_i^2-a^{-2} z_i^{-2}) \nonumber \\
&\times\frac{
\prod_{i,j=1}^M
(a^2+a^{-2}-z_i^{-2}w_j-z_i^2 w_j^{-1})(a^2+a^{-2}-z_i^{-2}w_j^{-1}-z_i^2 w_j)
}{\prod_{1 \le i<j \le M}
(-z_i^{-1}z_j+z_i z_j^{-1})(z_i^{-1}z_j^{-1}-z_i z_j)
(-w_j^{-1}+w_i^{-1})(w_i w_j-1)
} \nonumber \\
&\times \mathrm{det}_M \Bigg( \frac{1}
{
(a^2+a^{-2}-z_i^{-2}w_j-z_i^2 w_j^{-1})(a^2+a^{-2}-z_i^{-2}w_j^{-1}-z_i^2 w_j)
} \Bigg).
\label{pairing}
\end{align}
Here, the sum $\displaystyle \sum_{x \sqcup \overline{x}=\{1,2,\dots,M \}}$ means that
we take the sum over $x=\{ x_1,\dots, x_N \}$ $(1 \le x_1 < x_2 < \cdots < x_N \le M)$ and
$\overline{x}=\{ \overline{x_1} \dots \overline{x_{M-N}} \}$
$(1 \le \overline{x_1} < \overline{x_2} < \cdots < \overline{x_{M-N}} \le M)$
which forms a disjoint union of $\{1,2,\dots,M \}$,
$x \sqcup \overline{x}=\{1,2\dots,M \}$.
\end{theorem}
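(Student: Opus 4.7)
The plan is to recognize the left-hand side of \eqref{pairing} as an insertion of the resolution of the identity into the domain wall boundary partition function $Z_M(z_1,\dots,z_M|w_1,\dots,w_M)$, and then invoke the Tsuchiya--Kuperberg determinant formula of Theorem~\ref{inhomogeneousexpression}. Using Theorem~\ref{maintheoremstatement} to replace every $F$ by the wavefunction $W$ and every $\overline{F}$ by the dual wavefunction $\overline{W}$ turns the LHS of \eqref{pairing} into
\begin{align*}
\sum_{x \sqcup \overline{x}=\{1,\dots,M\}}
&\langle 1^M|\mathcal{B}(z_1|w_1,\dots,w_M)\cdots\mathcal{B}(z_{M-N}|w_1,\dots,w_M)|\overline{x_1}\cdots\overline{x_{M-N}}\rangle\\
&\times \langle x_1\cdots x_N|\mathcal{B}(z_{M-N+1}|w_1,\dots,w_M)\cdots\mathcal{B}(z_M|w_1,\dots,w_M)|0^M\rangle.
\end{align*}

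The crucial observation is that when $x \sqcup \overline{x}=\{1,\dots,M\}$, the vectors $|\overline{x_1}\cdots\overline{x_{M-N}}\rangle$ and $|x_1\cdots x_N\rangle$ coincide as elements of $\mathcal{F}_1\otimes\cdots\otimes\mathcal{F}_M$: by their definitions, both describe the spin configuration with state $1$ at positions $x_1,\dots,x_N$ and state $0$ at positions $\overline{x_1},\dots,\overline{x_{M-N}}$. Moreover, each operator $\mathcal{B}(z|w_1,\dots,w_M)$ raises the down-spin number by one, as is immediate from its expression
$\mathcal{B}=(ba^{-1}z^{-1}-b^{-1}az)D(z^{-1})B(z)+(baz-b^{-1}a^{-1}z^{-1})B(z^{-1})D(z)$
together with the standard facts that $B$ is a creation operator and $D$ preserves the particle number. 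Consequently the ket $\mathcal{B}(z_{M-N+1}|w_1,\dots,w_M)\cdots\mathcal{B}(z_M|w_1,\dots,w_M)|0^M\rangle$ lies in the $N$-particle sector, on which $\sum_{1\le x_1<\cdots<x_N\le M}|x_1\cdots x_N\rangle\langle x_1\cdots x_N|$ is the identity. The sum therefore collapses to
\[
\langle 1^M|\mathcal{B}(z_1|w_1,\dots,w_M)\cdots\mathcal{B}(z_M|w_1,\dots,w_M)|0^M\rangle
= W_{M,M}(z_1,\dots,z_M|w_1,\dots,w_M|1,\dots,M)
= Z_M(z_1,\dots,z_M|w_1,\dots,w_M).
\]

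Applying the Tsuchiya--Kuperberg determinant formula (Theorem~\ref{inhomogeneousexpression}) immediately yields the RHS of \eqref{pairing}. No significant obstacle is anticipated: once Theorem~\ref{maintheoremstatement} is in hand, the identity is a direct consequence of particle-number conservation, the completeness relation on the $N$-particle subspace, and the classical determinant formula for $Z_M$. The only point worth flagging is the identification $|\overline{x_1}\cdots\overline{x_{M-N}}\rangle=|x_1\cdots x_N\rangle$ for complementary index sets, which must be stated explicitly so that the two halves of the matrix element pair correctly through the completeness relation.
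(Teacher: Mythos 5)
Your proposal is correct and follows essentially the same route as the paper: both identify the left-hand side, via Theorem \ref{maintheoremstatement}, with the insertion of the completeness relation into $Z_M(z_1,\dots,z_M|w_1,\dots,w_M)=\langle 1^M|\mathcal{B}(z_1)\cdots\mathcal{B}(z_M)|0^M\rangle$ after the first $M-N$ double-row $B$-operators, and then invoke the Tsuchiya--Kuperberg determinant formula of Theorem \ref{inhomogeneousexpression}. Your explicit remarks on the identification $|\overline{x_1}\cdots\overline{x_{M-N}}\rangle=|x_1\cdots x_N\rangle$ for complementary index sets and on the particle-number raising property of $\mathcal{B}$ merely spell out details the paper leaves implicit.
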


\begin{proof}
This identity is a consequence of two ways of evaluation
of the domain wall boundary partition function under reflecting boundary
$Z_M(z_1,\dots,z_M|w_1,\dots,w_M)$.
First,
\eqref{determinantinhomogeneheous} in Theorem \ref{inhomogeneousexpression}
gives a direct determinant representation of
$Z_M(z_1,\dots,z_M|w_1,\dots,w_M)$.

Another way to evaluate the domain wall boundary partition function
is to insert the completeness relation
\begin{align}
\sum_{\{ x \}}|x_1 \cdots x_N \rangle \langle x_1 \cdots x_N |=\mathrm{Id},
\end{align}
between the $B$-operators and
using the explicit representations of the
the wavefunctions and its dual
\eqref{maintheorem} and \eqref{dualmaintheorem}
to get
\begin{align}
&Z_M(z_1,\dots,z_M|w_1,\dots,w_M) \nonumber \\
=&\langle 1^M|
\mathcal{B}(z_1|w_1,\dots,w_M) \cdots \mathcal{B}(z_M|w_1,\dots,w_M)|0^M \rangle \nonumber \\
=&\sum_{\{x \}} \langle 1^M|\mathcal{B}(z_1|w_1,\dots,w_M) \cdots
\mathcal{B}(z_{M-N}|w_1,\dots,w_M) |x_1 \cdots x_N \rangle \nonumber \\
&\times \langle x_1 \cdots x_N|
\mathcal{B}(z_{M-N+1}|w_1,\dots,w_M) \cdots
\mathcal{B}(z_M|w_1,\dots,w_M)|0^M \rangle \nonumber \\
=&\sum_{x \sqcup \overline{x}=\{1,\dots,M \}} \langle 1^M|\mathcal{B}(z_1|w_1,\dots,w_M) \cdots
\mathcal{B}(z_{M-N}|w_1,\dots,w_M) |\overline{x_1} \cdots \overline{x_{M-N}} \rangle
\nonumber \\
&\times \langle x_1 \cdots x_N|
\mathcal{B}(z_{M-N+1}|w_1,\dots,w_M) \cdots
\mathcal{B}(z_M|w_1,\dots,w_M)|0^M \rangle \nonumber \\
=&\sum_{x \sqcup \overline{x}=\{1,2,\dots,M \}} \overline{W}(z_1,\dots,z_{M-N}|w_1,\dots,w_M|\overline{x_1},\dots,\overline{x_{M-N}}) \nonumber \\
&\times W(z_{M-N+1},\dots,z_M|w_1,\dots,w_M|x_1,\dots,x_N),
\nonumber \\
=&\sum_{x \sqcup \overline{x}=\{1,2,\dots,M \}} \overline{F}(z_1,\dots,z_{M-N}|w_1,\dots,w_M|\overline{x_1},\dots,\overline{x_{M-N}}) \nonumber \\
&\times F(z_{M-N+1},\dots,z_M|w_1,\dots,w_M|x_1,\dots,x_N).
\label{comparisontwo}
\end{align}
Comparing the two ways of evaluations
\eqref{comparisontwo} and
\eqref{determinantinhomogeneheous}
gives the identity \eqref{pairing}.
\end{proof}

Let us also examine the homogeneous limit $w_1=\cdots=w_M=1$.
First, one can show the following.

\begin{theorem} \label{homogeneousexpression}
In the homogeneous limit $w_1=\cdots=w_M=1$,
the domain wall boundary partition function under reflecting boundary
$Z_M(z_1,\dots,z_M|1,\dots,1)$ can be expressed in the following form:
\begin{align}
&Z_M(z_1,\dots,z_M|1,\dots,1)
=a^{2M} (b-b^{-1})^M \prod_{i=1}^M
\frac{a^2 z_i^2-a^{-2} z_i^{-2}}{1-z_i^{-4}} \nonumber \\
&\times \frac{
\prod_{i=1}^M
(a^2+a^{-2}-z_i^{-2}-z_i^2)^{2M}
}{
\prod_{1 \le i<j \le M}
(-z_i^{-1}z_j+z_i z_j^{-1})(z_i^{-1}z_j^{-1}-z_i z_j)
}
\mathrm{det}_M \Bigg(
\frac{(a^2 z_i^2)^{j-1}}{(a^2-z_i^{2})^{2j}}
-\frac{(a^2 z_i^2)^{j-1}}{(1-a^2 z_i^2)^{2j}}
\Bigg).
\end{align}
\end{theorem}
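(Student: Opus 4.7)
The plan is to obtain Theorem~\ref{homogeneousexpression} by a confluent (coincidence) limit $w_1,\dots,w_M \to 1$ applied to the inhomogeneous determinant formula of Theorem~\ref{inhomogeneousexpression}. The naive substitution is indeterminate, since the Vandermonde-type product $\prod_{i<j}(-w_j^{-1}+w_i^{-1})(w_iw_j-1)$ in the denominator and the determinant itself both vanish. The key idea is to pass to the new variables $u_j:=w_j+w_j^{-1}$, in which the singularity is standard.

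First I would record two algebraic preliminaries. (i) A direct expansion gives $(-w_j^{-1}+w_i^{-1})(w_iw_j-1) = u_j-u_i$, so the offending product is just the Vandermonde $\prod_{i<j}(u_j-u_i)$. (ii) Each determinant entry
\[
f_i(w_j) \;=\; \frac{1}{(a^2+a^{-2}-z_i^{-2}w_j-z_i^2w_j^{-1})(a^2+a^{-2}-z_i^{-2}w_j^{-1}-z_i^2w_j)}
\]
is symmetric in $w_j\leftrightarrow w_j^{-1}$, hence a function of $u_j$ only. Expanding the denominator as a polynomial in $u_j$ and computing its discriminant $(a^2-a^{-2})^2(z_i^2-z_i^{-2})^2$ yields the factorization $(u_j-\alpha_i)(u_j-\beta_i)$ with $\alpha_i=a^2z_i^2+a^{-2}z_i^{-2}$, $\beta_i=a^2z_i^{-2}+a^{-2}z_i^2$, and $\alpha_i-\beta_i=(a^2-a^{-2})(z_i^2-z_i^{-2})$.

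With these preparations the ratio $\det(f_i(u_j))/\prod_{i<j}(u_j-u_i)$ is a symmetric rational function of $u_1,\dots,u_M$ whose limit as all $u_j\to 2$ is computed by the standard confluent-limit formula (proved by converting columns to divided differences):
\[
\lim_{u_1,\dots,u_M\to 2}\frac{\det\bigl(f_i(u_j)\bigr)_{i,j=1}^M}{\prod_{1\le j<k\le M}(u_k-u_j)}
\;=\;\det\!\left(\frac{1}{(j-1)!}\,\partial_u^{\,j-1}f_i(u)\Big|_{u=2}\right)_{i,j=1}^M.
\]
The $(j-1)$-th derivative is then read off from the partial fraction decomposition $f_i(u)=(\alpha_i-\beta_i)^{-1}\bigl[(u-\alpha_i)^{-1}-(u-\beta_i)^{-1}\bigr]$. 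Substituting $u=2$ and using the identities
\[
2-\alpha_i \;=\; -\frac{(a^2z_i^2-1)^2}{a^2z_i^2}, \qquad 2-\beta_i \;=\; -\frac{(a^2-z_i^2)^2}{a^2z_i^2},
\]
a short computation reduces the $(i,j)$-entry to
\[
\frac{a^2z_i^2}{(a^2-a^{-2})(z_i^2-z_i^{-2})}\left(\frac{(a^2z_i^2)^{j-1}}{(a^2-z_i^2)^{2j}}-\frac{(a^2z_i^2)^{j-1}}{(1-a^2z_i^2)^{2j}}\right).
\]
The row prefactor is independent of $j$ and can be pulled out of the determinant, producing $\prod_i a^2/[(a^2-a^{-2})(1-z_i^{-4})]$ times the target determinant.

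It remains to assemble everything. Setting $w_j=1$ in the surviving prefactors of Theorem~\ref{inhomogeneousexpression} gives $(a^2-a^{-2})^M(b-b^{-1})^M\prod_i(a^2z_i^2-a^{-2}z_i^{-2})$ together with the numerator $\prod_i(a^2+a^{-2}-z_i^{-2}-z_i^2)^{2M}$ and the $z$-Vandermonde in the denominator. Multiplying by the extracted row factor cancels one copy of $(a^2-a^{-2})^M$, produces $a^{2M}$, and yields the quotient $\prod_i(a^2z_i^2-a^{-2}z_i^{-2})/(1-z_i^{-4})$, matching the stated expression verbatim. The only real obstacle is careful bookkeeping of signs and powers of $a^2z_i^2$ in the derivative step; once the partial fraction is in place and the identities for $2-\alpha_i,\,2-\beta_i$ are used, the rest is mechanical.
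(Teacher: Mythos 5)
Your proposal is correct and follows essentially the same route as the paper: both pass to the symmetric variable $w_j+w_j^{-1}$ (the paper uses $W_j=(w_j+w_j^{-1})/2$) to turn the offending product into a Vandermonde, partial-fraction the determinant entry over the two roots $a^2z_i^2+a^{-2}z_i^{-2}$ and $a^2z_i^{-2}+a^{-2}z_i^2$, apply the Izergin--Coker--Korepin confluent-limit (divided-difference) formula, and reassemble the prefactors. The computations check out, so no further comparison is needed.
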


\begin{proof}

We divide the inhomogeneous determinant of
$Z_M(z_1,\dots,z_M|w_1,\dots,w_M)$
\eqref{determinantinhomogeneheous}
into two parts as
\begin{align}
&Z_M(z_1,\dots,z_M|w_1,\dots,w_M)=P_1 P_2, \nonumber \\
&P_1=(a^2-a^{-2})^M \prod_{i=1}^M w_i^M \prod_{i=1}^M (bw_i^{-1}-b^{-1})
(a^2 z_i^2-a^{-2} z_i^{-2}) \nonumber \\
&\times \frac{
\prod_{i,j=1}^M
(a^2+a^{-2}-z_i^{-2}w_j-z_i^2 w_j^{-1})(a^2+a^{-2}-z_i^{-2}w_j^{-1}-z_i^2 w_j)
}{
\prod_{1 \le i<j \le M}
(-z_i^{-1}z_j+z_i z_j^{-1})(z_i^{-1}z_j^{-1}-z_i z_j)
}, \nonumber \\
&P_2=\frac{
1}{\prod_{1 \le i<j \le M}
(-w_j^{-1}+w_i^{-1})(w_i w_j-1)
} \nonumber \\
&\times \mathrm{det}_M \Bigg( \frac{1}
{
(a^2+a^{-2}-z_i^{-2}w_j-z_i^2 w_j^{-1})(a^2+a^{-2}-z_i^{-2}w_j^{-1}-z_i^2 w_j)
} \Bigg).
\end{align}
It is easy to take the homogeneous limit of $P_1$
\begin{align}
\lim_{w_1 \to 1,\dots,w_M \to 1}P_1
&=
(a^2-a^{-2})^M (b-b^{-1})^M \prod_{i=1}^M
(a^2 z_i^2-a^{-2} z_i^{-2}) \nonumber \\
&\times \frac{
\prod_{i=1}^M
(a^2+a^{-2}-z_i^{-2}-z_i^2)^{2M}
}{
\prod_{1 \le i<j \le M}
(-z_i^{-1}z_j+z_i z_j^{-1})(z_i^{-1}z_j^{-1}-z_i z_j)
}. \label{tochulimit1}
\end{align}
Next, we examine the limit of $P_2$.
We rewrite $P_2$ in terms of $W_j=(w_j+w_j^{-1})/2$ as
\begin{align}
P_2&=\frac{
1}{\prod_{1 \le i<j \le M}
2(W_j-W_i)
} \mathrm{det}_M \Bigg( \frac{a^4 z_i^4}
{
(a^4+z_i^{4}-2a^2 z_i^2 W_j)(1+a^4 z_i^4-2a^2 z_i^2 W_j)
} \Bigg) \nonumber \\
&=\frac{
1}{\prod_{1 \le i<j \le M}
2(W_j-W_i)
} \nonumber \\
&\times \mathrm{det}_M \Bigg( \frac{a^4 z_i^4}{(a^4-1)(z_i^4-1)}
\Bigg(
\frac{1}{a^4+z_i^{4}-2a^2 z_i^2 W_j}
-\frac{1}{1+a^4 z_i^4-2a^2 z_i^2 W_j}
\Bigg)
\Bigg)
\nonumber \\
&=2^{-M(M-1)/2}
\frac{a^{4M} \prod_{i=1}^M z_i^4}{(a^4-1)^{M} \prod_{i=1}^M (z_i^4-1)}
\frac{
1}{\prod_{1 \le i<j \le M}
(W_j-W_i)
} \nonumber \\
&\times \mathrm{det}_M \Bigg(
\frac{1}{a^4+z_i^{4}-2a^2 z_i^2 W_j}
-\frac{1}{1+a^4 z_i^4-2a^2 z_i^2 W_j}
\Bigg)
.
\end{align}
In this form, we can take the homogeneous limit in the same way with
Izergin-Coker-Korepin \cite{ICK} to get
\begin{align}
\lim_{W_1,\dots,W_M \to 1} P_2&=
2^{-M(M-1)/2}
\frac{a^{4M} \prod_{i=1}^M z_i^4}{(a^4-1)^{M} \prod_{i=1}^M (z_i^4-1)}
\nonumber \\
&\times \mathrm{det}_M \Bigg(
\frac{(2a^2 z_i^2)^{j-1}}{(a^4+z_i^{4}-2a^2 z_i^2)^j}
-\frac{(2a^2 z_i^2)^{j-1}}{(1+a^4 z_i^4-2a^2 z_i^2)^j}
\Bigg) \nonumber \\
&=
\frac{a^{4M} \prod_{i=1}^M z_i^4}{(a^4-1)^{M} \prod_{i=1}^M (z_i^4-1)}
\mathrm{det}_M \Bigg(
\frac{(a^2 z_i^2)^{j-1}}{(a^2-z_i^{2})^{2j}}
-\frac{(a^2 z_i^2)^{j-1}}{(1-a^2 z_i^2)^{2j}}
\Bigg).
\label{tochulimit2}
\end{align}
Combining \eqref{tochulimit1} and \eqref{tochulimit2}, we have
\begin{align}
&Z_M(z_1,\dots,z_M|1,\dots,1)
=a^{2M} (b-b^{-1})^M \prod_{i=1}^M
\frac{a^2 z_i^2-a^{-2} z_i^{-2}}{1-z_i^{-4}} \nonumber \\
&\times \frac{
\prod_{i=1}^M
(a^2+a^{-2}-z_i^{-2}-z_i^2)^{2M}
}{
\prod_{1 \le i<j \le M}
(-z_i^{-1}z_j+z_i z_j^{-1})(z_i^{-1}z_j^{-1}-z_i z_j)
}
\mathrm{det}_M \Bigg(
\frac{(a^2 z_i^2)^{j-1}}{(a^2-z_i^{2})^{2j}}
-\frac{(a^2 z_i^2)^{j-1}}{(1-a^2 z_i^2)^{2j}}
\Bigg).
\end{align}
\end{proof}

From Theorem \ref{homogeneousexpression},
we get the following algebraic identities
as a limit of Theorem \ref{inhomogeneouspairing}.

\begin{theorem}
We have the following algebraic identities for the
symmetric functions \\
$F(z_{M-N+1},\dots,z_M|1,\dots,1|x_1,\dots,x_N)$
and $\overline{F}(z_1,\dots,z_{M-N}|1,\dots,1|\overline{x_1},\dots,\overline{x_{M-N}})$
\begin{align}
&\sum_{x \sqcup \overline{x}=\{1,2,\dots,M \}} \overline{F}(z_1,\dots,z_{M-N}|1,\dots,1|\overline{x_1},\dots,\overline{x_{M-N}})
F(z_{M-N+1},\dots,z_M|1,\dots,1|x_1,\dots,x_N)
\nonumber \\
&=
a^{2M} (b-b^{-1})^M \prod_{i=1}^M
\frac{a^2 z_i^2-a^{-2} z_i^{-2}}{1-z_i^{-4}} \nonumber \\
&\times \frac{
\prod_{i=1}^M
(a^2+a^{-2}-z_i^{-2}-z_i^2)^{2M}
}{
\prod_{1 \le i<j \le M}
(-z_i^{-1}z_j+z_i z_j^{-1})(z_i^{-1}z_j^{-1}-z_i z_j)
}
\mathrm{det}_M \Bigg(
\frac{(a^2 z_i^2)^{j-1}}{(a^2-z_i^{2})^{2j}}
-\frac{(a^2 z_i^2)^{j-1}}{(1-a^2 z_i^2)^{2j}}
\Bigg).
\end{align}
Here, the sum $\displaystyle \sum_{x \sqcup \overline{x}=\{1,2,\dots,M \}}$ means that
we take the sum over $x=\{ x_1,\dots, x_N \}$ $(1 \le x_1 < x_2 < \cdots < x_N \le M)$ and
$\overline{x}=\{ \overline{x_1} \dots \overline{x_{M-N}} \}$
$(1 \le \overline{x_1} < \overline{x_2} < \cdots < \overline{x_{M-N}} \le M)$
which forms a disjoint union of $\{1,2,\dots,M \}$,
$x \sqcup \overline{x}=\{1,2\dots,M \}$.
\end{theorem}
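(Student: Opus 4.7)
The plan is to obtain this identity as the homogeneous limit $w_1, \dots, w_M \to 1$ of Theorem \ref{inhomogeneouspairing}. Since the inhomogeneous algebraic identity \eqref{pairing} has already been proven for generic parameters $w_1, \dots, w_M$, and since both sides are rational in the $w_j$'s (in fact polynomial on the left by inspection of the definitions \eqref{ordinaryrighthandside} and \eqref{dualordinaryrighthandside}), specializing should produce the desired equation without any additional combinatorial input.

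First, I would specialize the left-hand side of \eqref{pairing}. The symmetric functions $F_{M,N}$ and $\overline{F}_{M,N}$ depend polynomially on $w_1, \dots, w_M$, so the substitution $w_1 = \cdots = w_M = 1$ is simply the naive evaluation, yielding
\[
\sum_{x \sqcup \overline{x} = \{1,\dots,M\}} \overline{F}(z_1,\dots,z_{M-N}|1,\dots,1|\overline{x_1},\dots,\overline{x_{M-N}}) F(z_{M-N+1},\dots,z_M|1,\dots,1|x_1,\dots,x_N),
\]
which matches the left-hand side of the statement.

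Second, I would specialize the right-hand side, which is exactly the inhomogeneous determinant formula for $Z_M(z_1,\dots,z_M|w_1,\dots,w_M)$ from Theorem \ref{inhomogeneousexpression}. But its homogeneous limit has already been computed in Theorem \ref{homogeneousexpression}, giving precisely
\[
a^{2M}(b-b^{-1})^M \prod_{i=1}^M \frac{a^2 z_i^2 - a^{-2} z_i^{-2}}{1-z_i^{-4}} \cdot \frac{\prod_{i=1}^M (a^2+a^{-2}-z_i^{-2}-z_i^2)^{2M}}{\prod_{1 \le i < j \le M} (-z_i^{-1}z_j + z_i z_j^{-1})(z_i^{-1}z_j^{-1} - z_i z_j)} \, \mathrm{det}_M \Bigl( \tfrac{(a^2 z_i^2)^{j-1}}{(a^2 - z_i^2)^{2j}} - \tfrac{(a^2 z_i^2)^{j-1}}{(1 - a^2 z_i^2)^{2j}} \Bigr),
\]
which is exactly the right-hand side of the claimed identity. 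Combining these two limits establishes the theorem.

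There is essentially no obstacle, since both ingredients are already proven; the only technical point to verify is that the LHS limit commutes with the finite sum, which is immediate because the sum is over a finite index set of disjoint decompositions $x \sqcup \overline{x}$ and each summand is polynomial in the $w_j$'s. The content of this theorem is therefore the combination of the Izergin-Korepin/wavefunction analysis (producing the representation \eqref{pairing}) with the Izergin-Coker-Korepin style evaluation of the determinant performed in Theorem \ref{homogeneousexpression}.
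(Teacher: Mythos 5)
Your proposal is correct and follows exactly the paper's own route: the paper obtains this theorem by taking the homogeneous limit of Theorem \ref{inhomogeneouspairing}, substituting $w_1=\cdots=w_M=1$ directly on the left (where the dependence is polynomial) and invoking Theorem \ref{homogeneousexpression} for the singular limit of the determinant on the right. Your remark that the only technical point is the commutation of the limit with the finite sum is a fair, if minor, addition to what the paper states.
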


\section{Conclusion}
In this paper, we extended the recently developed
Izergin-Korepin analysis on the wavefunctions \cite{Moellipticfelderhof,MoAMP,Motr}
to the $U_q(sl_2)$ six-vertex model with reflecting end.
We determined the exact forms of the symmetric functions
representing the wavefunctions and its dual
based on the Izergin-Korepin analysis.
We also compared the homogeneous limit of the symmetric functions
with the coordinate Bethe ansatz wavefunctions
for the open XXZ chain by Alcaraz-Barber-Batchelor-Baxter-Quispel \cite{ABBBQ}.
As an application of the correspondence between the wavefunctions
and the symmetric functions, we have derived algebraic identities
for the symmetric functions by using the determinant formula for
the domain wall boundary partition functions with reflecting end
by Tsuchiya \cite{Tsuchiya} and Kuperberg \cite{Ku2}.
This idea was first used in Bump-McNamara-Nakasuji \cite{BMN}
to derive dual Cauchy identities for the factorial Schur functions
from the wavefunctions and the domain wall boundary partition functions
of the free-fermionic six-vertex model, and applied to
the $U_q(sl_2)$ six-vertex model in \cite{Motegi} to derive
algebraic identities for the quantum group deformation
of the Grothendieck polynomials.
A different way of using the domain wall boundary
partition functions can be seen in the paper by
Wheeler-Zinn-Justin \cite{WZnew}.

A similar Izergin-Korepin analysis can be done to study the 
free-fermionic six-vertex model under reflecting boundary \cite{Mopreparation}.
It is interesting to extend the analysis to other types
of boundary conditions.
For example, it seems that it is a more challenging task
to treat the wavefunctions under half-turn boundary condition.
At the level of the domain wall boundary partition functions,
the level of difficulty of treating the half-turn boundary and
the reflecting boundary seems to be the same,
but the half-turn boundary condition becomes more difficult
to treat when it comes to problem of the wavefunctions,
since it seems that there is no way to freeze two rows at once
for the case of the wavefunctions under half-turn boundary condition,
contrary to the reflecting boundary condition which is treated in this paper.

Another interesting topic is to study the thermodynamic limit.
As for the domain wall boundary partition functions,
many determinant or Pfaffian formulae are known.
However, it seems hard to expect that the wavefunctions in general
have such simple forms. It is an interesting topic to construct
a new way to study thermodynamic limit without resorting
to the deteminant or Pfaffian formulae.

\section*{Acknowledgments}
The author thanks the referee for helpful comments and suggestions
to improve the paper.
This work was partially supported by grant-in-Aid
for Scientific Research (C) No. 16K05468.

\end{document}